\newcommand{\oneover}[1]{\ensuremath{\frac{1}{#1}}}
\newcommand{\hil}{\ensuremath{\mathcal{H}}}
\newcommand{\ket}[1]{\ensuremath{|#1\rangle}}
\newcommand{\bra}[1]{\ensuremath{\langle #1|}}
\newcommand{\braket}[2]{\ensuremath{\langle #1 | #2 \rangle}}
\newcommand{\ave}[1]{\ensuremath{\left\langle#1\right\rangle}}
\newcommand{\tr}[1]{\ensuremath{\operatorname{Tr}\left[#1\right]}}
\newcommand{\trd}[2]{\ensuremath{D\left(#1,#2\right)}}
\newcommand{\norm}[1]{\ensuremath{\left\|#1\right\|}}
\newcommand{\misub}[2]{\ensuremath{N^{(#1)}_{#2}}}
\begin{document}

\title{Achieving the Holevo bound via a bisection decoding protocol}

\author{Matteo Rosati and Vittorio Giovannetti}
\affiliation{NEST, Scuola Normale Superiore and Istituto di Nanoscienze-CNR, Piazza dei Cavalieri 7, I-56126 Pisa, Italy}

 \begin{abstract}
  We present a new decoding protocol to realize transmission of classical information 
through a quantum channel at asymptotically maximum capacity, achieving the 
Holevo bound and thus the optimal communication rate. At variance with previous proposals, our scheme recovers the message bit by bit, making use of a series  ``yes-no'' measurements, organized in
bisection fashion, thus determining which codeword was sent in $\log_2N$ steps, $N$ 
being the number of codewords. 
\end{abstract}

\maketitle

\section{Introduction}
One of the main achievements in quantum information theory has been the 
development of a generalization of Shannon's theory for quantum communication\cite{WildeBOOK}. In particular, the 
Holevo bound \cite{HolevoBOOK,holevo1} sets a limit on the rate of reliable transmission of classical information 
through a quantum channel, which is also achievable in the asymptotic limit of infinitely long sequences 
\cite{holevo2,schumawest,holevo3,winter,oga,hauswoot,oganaga,hayanaga,hayashi,seq1,seq2,sen}. 
Consequently, via proper optimization and regularization \cite{hastings}, it provides the 
quantum analog of the Shannon classical capacity formula.\\
The original proof~\cite{holevo2,schumawest} was carried out by extending to the quantum regime the concept of 
typical subspaces used in Shannon communication theory \cite{classicalinfo,SchumaTyp}. A crucial point is the 
choice of a proper POVM which allows Bob to identify the right message with small 
error probability. The first explicit detection scheme used in this context is a one-step collective-measurement POVM known as 
\emph{Pretty Good Measurement} (PGM) \cite{schumawest,hauswoot}, highly effective theoretically but not 
easily realizable in practice. \\
Following the proof of Ogawa and Nagaoka \cite{oga,oganaga}, Hayashi and Nagaoka \cite{hayanaga}, which establishes 
a connection with the quantum-hypotesis-testing problem \cite{qhyptest}, the possibility of 
asymptotically achieving the bound through a series of ``yes-no'' projective measurements was investigated \cite{seq1,seq2,sen}.
This sequential protocol checks whether the received state resides in the typical 
subspace of a given codeword, for each codeword in the code, until it receives a 
positive answer or else declares failure. The ``yes-no'' question is asked, for 
each codeword, by applying the projector on its typical subspace and thus makes 
the decoding protocol more suited for practical implementations than the PGM. 
Indeed a design for an explicit and structured optical receiver was proposed \cite{wildeguha1,wildeguha2}, 
which used this protocol, with applications both to optical communication and quantum reading. In 
particular, for a lossy bosonic channel~\cite{HOLWER} (a model most commonly used to represent 
realistic fiber and free-space communication) it was shown that the sequential 
decoder can be built with gaussian displacement operators and vacuum-or-not 
measurements~\cite{sen,qhyptest,LOSSY}.
An alternative, near-explicit approach, for capacity-achieving classical-quantum communication 
was also recently  developed by Wilde and Guha\cite{polarWildeGuha},
adapting to the quantum scenario the classical polar coding introduced by Arikan\cite{Arikan}. In particular, 
making use  of optimal Helstrom
  measurements in the quantum-hypotesis-testing procedure and of Sen's non-commutative union bound\cite{sen},
 they proposed an encoding technique which realizes channel polarization and 
 consequently introduced a quantum successive cancellation decoder. 
  Later work modified such decoding strategy to obtain a partially non-collective measurement\cite{wildeHayden}
 and  extended polar coding to private and quantum communication through arbitrary qubit 
  channels\cite{privaQuant,wildeRen1,wildeRen2}. The relevance of this  approach is  associated with the fact that, at variance with other proposals~\cite{holevo2,schumawest,seq1,seq2,sen},
    it allows optimal decoding with a linear 
  (in the amount of bits) number of collective measurements.\\

In this paper we propose a bisection decoding scheme for classical communication through a quantum channel 
and show that it achieves the maximum capacity in the asymptotic limit of infinitely 
long codewords, providing yet an alternative proof of the attainability of the 
Holevo bound. While being inspired to the sequential decoding algorithm~\cite{seq1,seq2,sen}, 
analogously  to  Refs.~\onlinecite{wildeguha1,wildeguha2} our scheme exhibits an exponential advantage in the number of  measurements which have to be performed in order to recover
the message: specifically if the sequential method is built on $O(N)$ concatenated    ``yes-no'' detections, where $N$ is the number of codewords, 
the bisection method only requires $\log_2N$ of such ``{atomic}" steps, thus scaling linearly with the number of bits $n$ which one wishes to transmit. 
We stress however that, being our individual detections explicitly many-body operations, 
 at present we have no evidence in support of the fact that such advantage could be translated in a decoding scheme
which is efficient from the computational point of view, i.e. in terms of the number of quantum gates  one has to 
apply to the received string of quantum information carriers.A similar problem arises also in the case of polar codes (see e.g. 
Ref.~\onlinecite{wildeHayden}), and it is caused by the lack of an explicit implementation (or at least of an estimate of its complexity) of the ``{atomic}'' steps involved in the two protocols, i.e. the ``yes-no'' set  detections for the present method and the Helstrom measurement for polar coding. Still we believe that our method can be of some interest 
as it widens the class of known decoding strategies which are asymptotically optimal, increasing hence the chances of identifying at least one which 
is suitable to implementations. 
In this respect it is also worth noticing that the proposed scheme exhibits the nontrivial advantage of gaining a bit of information at each step of the procedure, a feature 
which may be extremely appealing when dealing with faulty decoders, as it  allows  partial identification of
    the transmitted message even in the presence of subsequent detection failures.

As in all the  previous works on the subject, in our derivation we heavily rely on the structure of typical projectors, although we need 
to properly combine them in order to build  efficient
``yes-no'' set  measurements which reconstruct the message bit-by-bit by checking, at each step of the procedure,   whether the received message belongs to one of two possible sets of codewords. 
In a effort to make the paper self-contained, we reproduce a series of known results~\cite{WildeBOOK,wildeSen} providing, in some cases, alternative proofs which are explicitly presented in the framework
which  best fits with the proposed approach.

The paper is organized as follows:
we start in Sec.~\ref{intro1}, where we introduce the notation and state the problem in a rigorous way. 
In Sec.~\ref{mathtool} we present some mathematical tools which are important to derive our results. In particular 
Sec.~\ref{sec:tip} is devoted to review some basic facts about the structure of typical subspaces of a quantum source, while 
Sec.~\ref{lemmas} discusses few Lemmas which allow us to put bounds on the probability of retrieving certain POVM outcomes from states which are close to each other. 
The bisection protocol is introduced in Sec.~\ref{sec:BIS}, identifying a sufficient condition which ensures it can asymptotically attain the Holevo bound in Sec.~\ref{sec:succprobA} 
and presenting three different methods which satisfy this condition. Conclusions are finally given in Sec.~\ref{sec:con}.

\section{The problem: achieving the Holevo bound}\label{intro1} 

Consider a memoryless quantum communication channel described by a completely positive, trace preserving (CPT) mapping~\cite{HolevoBOOK} $\mathcal{T}$
 that Alice (the sender of the communication scheme) uses to transmit classical messages to Bob (the receiver).
Given an alphabet $\cal A$ of classical symbols,  we define  a $N$-element code $\mathcal{C}:=\{\vec{j}^{(1)}, \cdots, \vec{j}^{(N)}\}$ 
as a subset of ${\cal A}^n$ which contains $N$ selected $n$-long strings $\vec{j} := (j_1, \cdots, j_n)$ of elements of $\cal A$: they represent the codewords which are employed by Alice to codify  $N$ distinct classical messages. 
  A quantum encoding  is then realized by assigning a mapping which,  given  $j\in {\cal A}$,  associates to it a density matrix $\sigma_j \in \mathfrak{S}(\mathcal{H})$  of the quantum carrier that propagates 
through the channel. Accordingly each string $\vec{j}\in {\cal A}^n$ will be represented by the product state
 $\sigma_{\vec{j}}:= \sigma_{j_1}\otimes\ldots\otimes\sigma_{j_n}\in \mathfrak{S}(\mathcal{H}^{\otimes n})$, 
 and received by Bob as 
  \begin{eqnarray} \label{outputdens}
\rho_{\vec{j}}: =\rho_{j_1}\otimes\ldots\otimes\rho_{j_n},
  \end{eqnarray}
   where 
 $\rho_{j}:=\mathcal{T}[\sigma_{j}]$ is the output density matrix  corresponding to the input  $\sigma_j$. 
 In this framework  each classical  code $\mathcal{C}$ is associated with a quantum code via the following classical-to-quantum correspondence 
   \begin{eqnarray}
 \mathcal{C} = \{\vec{j}^{(1)}, \cdots, \vec{j}^{(N)}\} \qquad \longrightarrow \qquad   \mathbf{C}:=\{ \rho_{\vec{j}^{(1)}}, \cdots, \rho_{\vec{j}^{(N)}}\}\;,
   \label{CODEQ}
 \end{eqnarray} 
 the  states  $\rho_{\vec{j}^{(\ell)}}$ being those which Bob has to discriminate in order to recover the message  Alice  sent to him while using the code ${\cal C}$.
For such  purpose he will employ a  decoding  POVM  of elements 
   \begin{equation}
   \left\{X_{1},\cdots, X_N, X_0=\mathbf{1}-\sum_{\ell=1}^NX_{\ell}\right\} \;, \label{POVM}
   \end{equation} whose outcome
   represents the inferred value  of the transmitted message. Specifically 
   for $\ell=1, \cdots, N$, the operator $X_\ell$ is associated with the
event where Bob assumes that  the received message is the $\ell$-th one, while $X_0$ is associated with an explicit failure of the decoding stage. 
Accordingly the average error probability  of the quantum code $\mathbf{C}$ can then be computed as
   \begin{equation}
     P_{err}( \mathbf{C}):=\oneover{N}\sum_{\ell =1}^N\left[1- p_{succ}(\ell)\right]= 1- \oneover{N}\sum_{\ell=1}^N p_{succ}(\ell).\label{err}
   \end{equation}
   where 
    \begin{equation}
     p_{succ}(\ell ):=\tr{X_{\ell}\;  \rho_{\vec{j}^{(\ell)}} 
         },\label{succ}
   \end{equation}
   is the 
   probability that Bob will successfully retrieve the $\ell$-th codeword when Alice transmits it.

   In the long message limit $n\rightarrow\infty$, it has been shown~\cite{holevo1,holevo2,schumawest} that $P_{err}( \mathbf{C})$ 
   can be sent to zero if the number of messages scales as $N=2^{nR}$, $R$ 
   being the transmission rate  of the scheme which is bounded by the Holevo theorem. Specifically we must have that 
   \begin{equation} \label{bound}
     R\leq\max_{\{p_j,\sigma_j\}}\chi(\{p_j,\rho_j\})=C_{Hol},
   \end{equation}
   where  on the right-hand-side the maximization is performed over all possible input ensembles $\{p_j,\sigma_j:j\in{\cal A}\}$  obtained by
      selecting the state $\sigma_j$ with probability distribution $p_j$, and where the 
   Holevo information of the associated output ensemble $\{p_j,\rho_j={\cal T}(\sigma_j); j \in {\cal A} \}$ is defined as
   \begin{equation}
     \chi(\{p_j,\rho_j\}):=S\left(\sum_jp_j\rho_j\right)-\sum_jp_jS(\rho_j),\label{holevoinfo}
   \end{equation}
   by means of the Von Neumann entropy $S(\rho)=-Tr[\rho\log_2\rho]$.
   
     It is known that the inequality~(\ref{bound})  is achievable, in the sense that, 
      for any 
   output ensemble ${\cal E} := \{p_j,\rho_j; j \in {\cal A} \}$, one can  identify a set $\mathbf{C}$ of $N\sim2^{n\; \chi(\{p_j,\rho_j\})}$ quantum 
   codewords  and a decoding POVM~(\ref{POVM})  for which the error 
   probability \eqref{err} goes to zero as $n$ increases. 
   This can be done by exploiting what, in classical information theory, is known as Shannon's averaging trick. 
The idea is as follows:  
the  ensemble ${\cal E}$ can be seen as a source which, when operating $n$ times, 
will  produce $n$-long product states $\rho_{\vec{j}}$ 
of the form
\eqref{outputdens} with probability
\begin{equation} \label{jointp}
  p_{\vec{j}}=p_{j_1}p_{j_2}\ldots p_{j_n}.
\end{equation}
Therefore iterating $N$ times this operation,  ${\cal E}$ will be able to 
generate a code $\mathbf{C}$ defined as in Eq.~(\ref{CODEQ})
with probability 
\begin{equation} \label{PROB}
  P(\mathbf{C})=\prod_{\ell=1,\cdots, N} 
  p_{\vec{j}^{(\ell)}}= \prod_{\ell=1,\cdots, N} 
  \prod_{q=1}^{n} p_{j_q^{(\ell)}},
\end{equation}
where $\vec{j}^{(\ell)}$ are the codewords of the classical counterpart ${\cal C}$  of $\mathbf{C}$.
The set  ${\cal S}:= \{ \mathbf{C}, P(\mathbf{C})\}$ defines the statistical collection of the  quantum codes  one can associate to ${\cal E}$
for fixed values of $N$ and $n$.
Accordingly, instead of optimizing  the total error probability~(\ref{err}) of a single element of such a set, we can now  
consider its averaged value  with respect to the probability $P(\mathbf{C})$, 
 \begin{equation}\label{PERRC}
  \ave{P_{err}}_{\cal S}:=\sum_{\mathbf{C}}P(\mathbf{C})P_{err}(\mathbf{C})= 1 -  \oneover{N}\sum_{\ell=1}^N \ave{p_{succ}(\ell)}_{\cal S}  \;,
\end{equation}
the rationale being that if  this quantity  can be forced to  go to zero in the limit $n\rightarrow\infty$ then at least one (actually almost all)  code 
exists in  ${\cal S}$ for which $P_{err}(\mathbf{C})$ tends to zero in the same limit.

 The first proof ~\cite{holevo2,schumawest}  of this fact made use of a single-step decoding POVM~(\ref{POVM}), known as 
  \emph{pretty good measurement} (PGM) or \emph{square root measurement}, which
is extremely efficient from a theoretical point of view but difficult to implement.  
 More recently, a sequential decoding scheme has been introduced~\cite{seq1,seq2,sen}, which makes use of projective ``yes-no'' measurements to verify whether the received 
 state corresponds to a certain codeword or not. Following an arbitrary ordering 
 of codewords, this question is asked for each of them in turn, until either a positive 
 answer is obtained for some $\vec{j}$ or else a negative answer for all the 
 codewords. To some extent the sequential scheme appears to be easier to realize in practice as it decomposes the process into a series of simple steps,  and indeed several 
 proposals have been made for its use in the context of continuos variable communication lines~\cite{HOLWER,cavesDrum,braunstein,weedbrook}.
 Still it has 
 a major drawback in its scaling, since an order of $N=2^{nR}$ operations is 
 required for its application. 
 The protocol presented here is inspired by the sequential decoding but makes use of a bisection method, 
 performing at each step a ``yes-no'' measurement for a set  of possible 
 codewords, whose size is progressively halved, allowing Bob to recover the transmitted message
 bit-by-bit. 

\section{Mathematical tools} \label{mathtool}
This section reviews some basic facts about typical subspaces and presents some inequalities which will be useful in proving the optimality of our
decoding scheme. For a complete description of the following 
properties we refer the reader to Refs.~\onlinecite{WildeBOOK,HolevoBOOK,HOLREV,sen,WINTERPHD}.

\subsection{Typical subspaces}\label{sec:tip} 

Consider the average state \begin{equation}\rho=\sum_{j\in{\cal A}}p_j\rho_j=\sum_x q_x |e_x\rangle\langle e_x|,\end{equation} 
of the quantum  source ${\cal E} := \{p_j,\rho_j; j \in {\cal A} \}$
and its spectral decomposition in terms of the eigenbasis $\{\ket{e_x}\}$ of $\mathcal{H}$ and the eigenvalues
$\{q_x\}$. This induces a classical random variable $X$ with probability 
    distribution $q_x$ which, on $n$ sampling events,  produces the sequence  $\vec{x}=(x_1,\cdots, x_n)$ with probability $q_{\vec{x}} = \prod_{\ell=1}^n q_{x_\ell}$.
    The classical $\delta-$typical subspace $T_\delta^n$ is defined 
    as the subspace of  such sequences whose sample entropy differs from the 
    expected entropy of the random variable for less than a given quantity $\delta>0$:
    \begin{equation}
      T_\delta^n=\left\{\vec{x}:|\bar{H}(\vec{x})-H(X)|\leq\delta\right\},
    \end{equation}
    the sample entropy of a codeword being  
    \begin{equation}
      \bar{H}(\vec{x})=-\oneover{n}\log_2{q_{\vec{x}}}=-\oneover{n}\sum_{i=1}^n\log_2{q_{x_i}},
    \end{equation}
    i.e. the average information content of the $n$ symbols in the $\vec{x}$ 
    sequence, while the associated Shannon entropy is defined as usual:
    \begin{equation}
      H(X)=-\sum_xq_x\log_2q_x = S(\rho),
    \end{equation}
    where in the last identity we used the correspondence with the von Neumann entropy functional of the average state $\rho$.
    As a consequence, the $\delta-$typical subspace $\hil_{typ}^{(n)}$ of quantum state $\rho$ is made of all those vectors $\ket{e_{\vec{x}}}$
    whose corresponding classical sequence is $\delta-$typical, i.e. 
    $\vec{x}\in T_\delta^n$.    
The projector on this subspace is given by
\begin{equation}
  P=\sum_{\vec{x}\in T_\delta^n}\ket{e_{\vec{x}}}\bra{e_{\vec{x}}}.
\end{equation}
Similar properties as for the classical typical subspace hold for the quantum 
one, namely
\begin{align}
  &\tr{P\rho^{\otimes n}}\geq1-\epsilon_1,\label{totaleuno}\\
  &\tr{P}\leq 2^{n\left[S(\rho)+\delta\right]},\label{totaledue}\\
  &2^{-n\left[S(\rho)+\delta\right]}P\leq P\rho^{\otimes 
  n}P\leq 2^{-n\left[S(\rho)-\delta\right]}P,\label{totaletre}
\end{align} 
for $\epsilon_1>0$ and $n$ sufficiently large. These properties state 
respectively that:
\begin{itemize}
  \item The quantum state $\rho^{\otimes n}$ resides with high probability in 
  the $\delta-$typical subspace of $\rho$;
  \item The size of the $\delta-$typical subspace is exponentially smaller than 
  the size of the whole space, unless the source is maximally mixed, i.e. 
  $S(\rho)=\log_2d$;
  \item The probability distribution of $\delta-$typical sequences is approximately uniform
  $\sim 2^{-nS(\rho)}$.
\end{itemize}
It is finally important to observe that the parameter $\epsilon_1$ entering in Eq.~(\ref{totaleuno}) can be linked to $n$ via
an exponential scaling\cite{Gallager}, i.e.  $\epsilon_1 = {O}(e^{-n})$, which ensures that for all polynomial functions $poly(n)$ of $n$  one has 
\begin{eqnarray} \label{polyn}
 \lim_{n\rightarrow \infty} poly({n}) \; \epsilon_1 =0,
\end{eqnarray}  
(see Appendix~\ref{append} for details). 

Similar typical subspaces can be identified also for each specific state $\rho_{\vec{j}}$ 
produced by the source, i.e. for each codeword in $\mathbf{C}$, by using the notion of conditional typicality. 
Indeed each source state can be seen as a classical-quantum 
state $\ket{j}\bra{j}\otimes\rho_j$ and its spectral decomposition will be in 
terms of eigenvectors $\{\ket{j}\otimes\ket{e^j_y}\}$ and eigenvalues 
$\{\lambda^j_y\}$. This again induces the classical random variables $J$, with probability distribution $p_j$ 
representing the possible states emitted by the source, and 
$Y$, with conditional probability distribution $\lambda^j_y=p(y|j)$.
The classical $\delta-$conditionally typical subspace is then defined for each 
$n-$long sequence $\vec{j}$ as 
\begin{equation}
  T_\delta^{\vec{j}}=\left\{\vec{y}:|\bar{H}(\vec{y}|\vec{j})-H(Y|J)|\leq\delta\right\},
\end{equation}
where now the entropic quantities are conditional ones, i.e.
\begin{align}
 & \bar{H}(\vec{y}|\vec{j})=-\oneover{n}\log_2\lambda^{\vec{j}}_{\vec{y}}=-\oneover{n}\sum_{i=1}^n\log_2\lambda^{j_i}_{y_i}\\
 &H(Y|J)=\sum_{j\in\mathcal{A}} p_j H(Y|j)=-\sum_{j,y}p_j\lambda^j_y\log_2\lambda^j_y.
\end{align}
The $\delta-$conditionally typical subspace $\mathcal{H}^{\vec{j}}_{typ}$ of 
quantum codeword state $\rho_{\vec{j}}$ is made of all those vectors $\ket{e^{\vec{j}}_{\vec{y}}}$ 
whose corresponding classical sequence is $\delta-$conditionally typical, i.e. $\vec{y}\in 
T^{\vec{j}}_{\delta}$. 
The projector on this subspace is given by
\begin{equation}
  P_{\vec{j}}=\sum_{\vec{y}\in T^{\vec{j}}_{\delta}}\ket{e^{\vec{j}}_{\vec{y}}}\bra{{e^{\vec{j}}_{\vec{y}}}}.
\end{equation}
Given $\epsilon_2>0$ and $n$ sufficiently large, the following three main properties hold for the conditionally typical subspace:
\begin{eqnarray}
    &&\sum_{\vec{j}} p_{\vec{j}} \; {\tr{P_{\vec{j}}\rho_{\vec{j}}}}\geq1-\epsilon_2,\label{parzialeuno}\\
  &&\sum_{\vec{j}} p_{\vec{j}} \; {\tr{P_{\vec{j}}}}\leq 2^{n\left[\sum_{j\in\mathcal{A}}p_j S(\rho_j)+\delta\right]},\label{parzialedue}\\
  &&2^{-n\left[\sum_{j\in\mathcal{A}}p_jS(\rho_j)+\delta\right]}P_{\vec{j}}\leq P_{\vec{j}}\; \rho_{\vec{j}}\; P_{\vec{j}}\leq 
  2^{-n\left[\sum_{j\in\mathcal{A}}p_j S(\rho_j)-\delta\right]}P_{\vec{j}},\label{parzialetre}
\end{eqnarray}
where in the first two expressions the average~\cite{NOTA} is taken  with respect to the joint probability $p_{\vec{j}}$ of ${\cal E}$ introduced in Eq.~(\ref{jointp}), while the last inequality applies for all $\vec{j}$.
As for  Eq.~(\ref{totaleuno}) we stress that the parameter $\epsilon_2$ of Eq.~(\ref{parzialeuno}) can be chosen to have an exponential scaling in $n$ which guarantees that the condition~(\ref{polyn}) holds also in this case. 
 Note finally that the conditionally typical subspaces of different codewords are in general not orthogonal, 
since they are built using vectors of two spectral decompositions of the same 
space $\mathcal{H}^{\otimes n}$. 

\subsection{Measurement Lemmas}\label{lemmas}
We state here some Lemmas which will be used in the rest of the article. They 
relate in various ways quantum states before and after a measurement, with the slight 
but crucial detail that the latter need not be normalized. Formally one can represent them as \emph{subnormalized} density matrices, i.e. 
positive operators whose trace is smaller than or equal to one.

An explicit  proof of the first three Lemmas can be found in Appendix \ref{appendue}: they refer to properties of the trace norm, which for a generic operator $\theta$, is defined as 
$\norm{\theta}_1=\mbox{Tr}{|\theta|}$ with  $|\theta| = \sqrt{\theta^\dag \theta}$ being the modulus of $\theta$.  The 
last Lemma instead was proved by Sen~\cite{sen} and provides an alternative, useful, way of estimating the 
 error probability of the 
sequential decoding protocol of Refs.~\onlinecite{seq1,seq2}.

 \newtheorem{lemma}{Lemma}
\begin{lemma}\label{appclose}
 (Measurement on approximately close states) Let $\rho,\sigma$ be subnormalized density matrices. Let $E$ be a positive and less-than-one 
 operator, i.e. $0\leq E\leq\mathbf{1}$. Then 
 \begin{equation}
   \tr{E\rho}\geq\tr{E\sigma}-2D(\rho,\sigma),\label{appcloseq}
 \end{equation}
 where $D(\rho,\sigma)=\oneover{2}\norm{\rho-\sigma}_1$ is the trace distance between $\rho$ and $\sigma$. 
\end{lemma}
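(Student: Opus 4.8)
The plan is to prove the inequality $\tr{E\rho} \geq \tr{E\sigma} - 2D(\rho,\sigma)$ by controlling the difference $\tr{E\rho} - \tr{E\sigma} = \tr{E(\rho-\sigma)}$ in terms of the trace norm of $\rho - \sigma$. The key observation is that the trace norm admits a variational characterization as a supremum over operators bounded by the identity: for any Hermitian operator $\theta$, one has $\norm{\theta}_1 = \max_{-\mathbf{1}\leq M\leq\mathbf{1}} \tr{M\theta}$, where the maximum runs over all Hermitian $M$ with $-\mathbf{1}\leq M \leq \mathbf{1}$. Since $\rho$ and $\sigma$ are (subnormalized) density matrices, their difference $\theta := \rho - \sigma$ is Hermitian, so this characterization applies directly.

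First I would set $\theta = \rho - \sigma$ and note it is Hermitian. The candidate test operator built from $E$ is not quite admissible for the variational formula, since $E$ satisfies $0\leq E\leq\mathbf{1}$ rather than $-\mathbf{1}\leq E\leq\mathbf{1}$; the standard fix is to pass to $M := 2E-\mathbf{1}$, which does satisfy $-\mathbf{1}\leq M\leq\mathbf{1}$ because $0\leq E\leq\mathbf{1}$. Plugging this admissible $M$ into the variational bound gives $\tr{(2E-\mathbf{1})\theta} \leq \norm{\theta}_1$, i.e. $2\tr{E\theta} - \tr{\theta} \leq \norm{\rho-\sigma}_1 = 2D(\rho,\sigma)$. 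Rearranging yields $\tr{E\theta} \leq D(\rho,\sigma) + \tfrac{1}{2}\tr{\theta}$. A cleaner route, which avoids carrying the stray $\tr{\theta}$ term, is to bound $\tr{E\theta}$ directly: since $E$ and $\mathbf{1}-E$ are both bounded by $\mathbf{1}$, one has both $\tr{E\theta}\leq\norm{\theta}_1/2$ type estimates, but the sharp statement comes from using the eigendecomposition of $\theta$.

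The cleanest self-contained argument decomposes $\theta$ into its positive and negative parts, $\theta = \theta_+ - \theta_-$ with $\theta_\pm \geq 0$ and $\norm{\theta}_1 = \tr{\theta_+} + \tr{\theta_-}$. Then $\tr{E\theta} = \tr{E\theta_+} - \tr{E\theta_-} \geq -\tr{E\theta_-} \geq -\tr{\theta_-}$, where the first inequality drops the nonnegative term $\tr{E\theta_+}\geq 0$ and the second uses $0\leq E\leq\mathbf{1}$ so that $\tr{E\theta_-}\leq\tr{\theta_-}$. Since $\tr{\theta_-}\leq\tr{\theta_+}+\tr{\theta_-} = \norm{\rho-\sigma}_1 = 2D(\rho,\sigma)$, we obtain $\tr{E(\rho-\sigma)}\geq -2D(\rho,\sigma)$, which is exactly $\tr{E\rho}\geq\tr{E\sigma} - 2D(\rho,\sigma)$.

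I do not expect any serious obstacle here; the result is essentially a restatement of the operational meaning of the trace distance, and the only point requiring a little care is that $\rho,\sigma$ are merely \emph{subnormalized}, so one must not assume $\tr{\theta}=0$. The positive/negative-part decomposition sidesteps this issue entirely, since it never invokes normalization and the bound $\tr{\theta_-}\leq\norm{\theta}_1$ holds regardless of the traces of $\rho$ and $\sigma$. The factor of $2$ in the statement arises precisely because $\norm{\rho-\sigma}_1 = 2D(\rho,\sigma)$, so tracking the constant correctly through the chain of inequalities is the one bookkeeping detail worth double-checking.
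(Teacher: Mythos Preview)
Your final argument via the Jordan decomposition $\theta=\theta_{+}-\theta_{-}$ is correct and fully handles the subnormalized case, so the proof goes through. However, the paper's own proof is shorter and follows exactly the variational route you began with and then abandoned. The point you missed is that $0\leq E\leq\mathbf{1}$ already implies $-\mathbf{1}\leq E\leq\mathbf{1}$, so $E$ itself is an admissible test operator in the characterization $\norm{\omega}_1=\max_{-\mathbf{1}\leq\Lambda\leq\mathbf{1}}\tr{\Lambda\omega}$ (this is precisely the paper's Lemma~\ref{trdist}). Applying that with $\omega=\sigma-\rho$ gives $\tr{E(\sigma-\rho)}\leq\norm{\sigma-\rho}_1=2D(\rho,\sigma)$ in one line, and rearranging is the lemma. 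There is no need to shift to $M=2E-\mathbf{1}$ or to decompose $\theta$ at all.

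In terms of what each approach buys: the paper's argument is a two-line corollary of the variational formula, but it presupposes Lemma~\ref{trdist}. Your decomposition argument is self-contained (it never invokes the variational characterization) and makes transparent exactly where the factor of $2$ comes from, at the cost of a few more steps. Both are standard; yours is arguably more elementary, the paper's is more streamlined given the surrounding lemmas.
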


\begin{lemma}\label{gentop}
  (Gentle operator) Let $\rho$ be a subnormalized density matrix and $E$ a positive and less-than-one operator, i.e. $0\leq 
  E\leq\mathbf{1}$. Let also $\ave{\cdots}$ denote the average with respect to some probability distribution, which $\rho$ and $E$ may depend on. 
  Suppose that, for some $1\geq\epsilon>0$,
  \begin{equation}
    \ave{\tr{E\rho}}\geq 1-\epsilon.\label{gentopequno}
  \end{equation}
  Then
  \begin{equation}
    \ave{D\left(\sqrt{E}\rho\sqrt{E},\rho\right)}\leq \sqrt{\epsilon}.\label{gentopeqdue}
  \end{equation}
\end{lemma}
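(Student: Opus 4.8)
The plan is to prove the "gentle operator" lemma, which bounds the trace distance between a subnormalized state $\rho$ and its post-measurement version $\sqrt{E}\rho\sqrt{E}$ in terms of how close $\tr{E\rho}$ is to one on average. First I would recall the standard strategy: bound the trace distance $D(\sqrt{E}\rho\sqrt{E},\rho)$ pointwise by a function of $\tr{E\rho}$, and only at the end invoke the average hypothesis \eqref{gentopequno} together with concavity to obtain \eqref{gentopeqdue}. The key elementary fact I would use is that for any operators $A,B$ one has $\norm{A-B}_1 \leq \norm{A-C}_1 + \norm{C-B}_1$, and more importantly that $\norm{X\rho Y}_1$ can be controlled via $\rho^{1/2}$ and Cauchy--Schwarz-type inequalities for the trace norm. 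Writing $\rho = \rho^{1/2}\rho^{1/2}$, I would decompose the difference as
\begin{equation}
\sqrt{E}\rho\sqrt{E} - \rho = \sqrt{E}\rho^{1/2}(\sqrt{E}-\mathbf{1})\rho^{1/2} + (\sqrt{E}-\mathbf{1})\rho^{1/2}\cdot\rho^{1/2},
\end{equation}
so that by the triangle inequality and the Hölder-type bound $\norm{XY}_1\leq \norm{X}_2\norm{Y}_2$ each term factors into a Hilbert--Schmidt norm of $(\sqrt{E}-\mathbf{1})\rho^{1/2}$ times a bounded factor.

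Next I would estimate $\norm{(\sqrt{E}-\mathbf{1})\rho^{1/2}}_2^2 = \tr{\rho^{1/2}(\sqrt{E}-\mathbf{1})^2\rho^{1/2}} = \tr{(\mathbf{1}-\sqrt{E})^2\rho}$. Since $0\leq E\leq\mathbf{1}$, one has $0\leq\sqrt{E}\leq\mathbf{1}$ and hence the operator inequality $(\mathbf{1}-\sqrt{E})^2\leq \mathbf{1}-E$; this follows because for each eigenvalue $e\in[0,1]$ of $E$ we have $(1-\sqrt{e})^2 = 1-2\sqrt{e}+e \leq 1-e$, equivalent to $2e\leq 2\sqrt{e}$, i.e. $\sqrt{e}\leq 1$. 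Consequently $\tr{(\mathbf{1}-\sqrt{E})^2\rho}\leq \tr{(\mathbf{1}-E)\rho} = \tr{\rho}-\tr{E\rho}\leq 1-\tr{E\rho}$, using that $\rho$ is subnormalized. Collecting the factors from the decomposition and noting that $\norm{\sqrt{E}\rho^{1/2}}_2,\norm{\rho^{1/2}}_2\leq 1$, this yields a pointwise bound of the form $\norm{\sqrt{E}\rho\sqrt{E}-\rho}_1 \leq 2\sqrt{1-\tr{E\rho}}$, hence $D(\sqrt{E}\rho\sqrt{E},\rho)\leq \sqrt{1-\tr{E\rho}}$.

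Finally I would take the average over the relevant probability distribution. Since the square root is concave, Jensen's inequality gives
\begin{equation}
\ave{D(\sqrt{E}\rho\sqrt{E},\rho)} \leq \ave{\sqrt{1-\tr{E\rho}}} \leq \sqrt{\ave{1-\tr{E\rho}}} = \sqrt{1-\ave{\tr{E\rho}}} \leq \sqrt{\epsilon},
\end{equation}
where the last step uses hypothesis \eqref{gentopequno}. This establishes \eqref{gentopeqdue}.

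The main obstacle I anticipate is sharpening the pointwise estimate to land exactly on the constant in the claimed bound: the crude decomposition above must be arranged so that the two Hilbert--Schmidt factors combine into precisely the stated coefficient rather than a looser one, and care is needed because $\rho$ is only subnormalized, so $\norm{\rho^{1/2}}_2^2=\tr{\rho}\leq 1$ rather than $=1$. Verifying the operator inequality $(\mathbf{1}-\sqrt{E})^2\leq\mathbf{1}-E$ is the conceptual crux, and it works cleanly for $0\leq E\leq\mathbf{1}$; the subnormalization only helps, since it makes $\tr{(\mathbf{1}-E)\rho}\leq 1-\tr{E\rho}$ rather than an equality, so the final bound is never violated.
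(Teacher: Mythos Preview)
Your overall strategy---split $\sqrt{E}\rho\sqrt{E}-\rho$ into two pieces via the intermediate $\sqrt{E}\rho$, bound each by $\sqrt{\tr{(\mathbf{1}-\sqrt{E})^2\rho}}$ using a H\"older/Cauchy--Schwarz step, invoke the operator inequality $(\mathbf{1}-\sqrt{E})^2\leq\mathbf{1}-E$, and finish with concavity of the square root---is exactly the paper's approach. The paper carries out the Cauchy--Schwarz step by explicitly diagonalizing $\sqrt{\rho}$ and $\sqrt{E}$ and applying the discrete inequality, whereas your use of $\norm{XY}_1\leq\norm{X}_2\norm{Y}_2$ is a cleaner packaging of the same idea.

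There is, however, a genuine algebraic slip in your displayed decomposition. You write
\[
\sqrt{E}\rho\sqrt{E}-\rho=\sqrt{E}\,\rho^{1/2}(\sqrt{E}-\mathbf{1})\rho^{1/2}+(\sqrt{E}-\mathbf{1})\rho^{1/2}\cdot\rho^{1/2},
\]
but the first term on the right equals $\sqrt{E}\,\rho^{1/2}\sqrt{E}\,\rho^{1/2}-\sqrt{E}\rho$, which is \emph{not} $\sqrt{E}\rho\sqrt{E}-\sqrt{E}\rho$ unless $\sqrt{E}$ and $\rho^{1/2}$ commute. The correct identity is
\[
\sqrt{E}\rho\sqrt{E}-\rho=\sqrt{E}\rho(\sqrt{E}-\mathbf{1})+(\sqrt{E}-\mathbf{1})\rho,
\]
after which you factor each term as $\sqrt{E}\rho^{1/2}\cdot\rho^{1/2}(\sqrt{E}-\mathbf{1})$ and $(\sqrt{E}-\mathbf{1})\rho^{1/2}\cdot\rho^{1/2}$ and apply H\"older. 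Since $\norm{\sqrt{E}\rho^{1/2}}_2^2=\tr{E\rho}\leq 1$, $\norm{\rho^{1/2}}_2^2=\tr{\rho}\leq 1$, and $\norm{\rho^{1/2}(\sqrt{E}-\mathbf{1})}_2^2=\norm{(\sqrt{E}-\mathbf{1})\rho^{1/2}}_2^2=\tr{(\mathbf{1}-\sqrt{E})^2\rho}$, the rest of your argument goes through verbatim and yields the stated constant.
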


The two previous lemmas are well known for ordinary density matrices; they can be proved also for subnormalized 
ones by use of the following lemma.
\begin{lemma}\label{trdist}
  (Alternative form of trace norm for subnormalized states) Let $\omega$ be a 
  hermitian operator (in particular, $\omega$ could be a subnormalized density 
  matrix). Then 
  \begin{equation}
    \norm{\omega}_1=\max_{-\mathbf{1}\leq\Lambda\leq\mathbf{1}}\tr{\Lambda\omega}.\label{trdisteq}
  \end{equation}
\end{lemma}

\begin{lemma}\label{contra}
  (Contractivity of trace distance for POVM elements) Let $\rho,\sigma$ be 
  subnormalized density matrices and $0\leq E\leq\mathbf{1}$ a positive and 
  less-than-one operator (for example it could be a POVM element and/or a projector).
  Then
  \begin{equation}
    \trd{E\rho E}{E\sigma E}\leq \trd{\rho}{\sigma}.\label{contreq}
  \end{equation}
\end{lemma}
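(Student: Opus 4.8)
The plan is to reduce the claim to the variational formula for the trace norm furnished by Lemma~\ref{trdist}. Setting $\omega := \rho - \sigma$, which is Hermitian since $\rho$ and $\sigma$ are (subnormalized) density matrices, the asserted inequality $\trd{E\rho E}{E\sigma E}\leq\trd{\rho}{\sigma}$ is equivalent to $\norm{E\omega E}_1\leq\norm{\omega}_1$, so it suffices to control the trace norm of the conjugated operator $E\omega E$, which is again Hermitian.

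First I would apply Lemma~\ref{trdist} to $E\omega E$ and write
\begin{equation}
  \norm{E\omega E}_1=\max_{-\mathbf{1}\leq\Lambda\leq\mathbf{1}}\tr{\Lambda\,E\omega E}.
\end{equation}
Using cyclicity of the trace, $\tr{\Lambda E\omega E}=\tr{E\Lambda E\,\omega}$, so the task reduces to showing that $E\Lambda E$ is itself an admissible test operator in the variational formula for $\norm{\omega}_1$, i.e.\ that it satisfies the constraint $-\mathbf{1}\leq E\Lambda E\leq\mathbf{1}$.

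The key step is precisely this operator inequality. Since $0\leq E\leq\mathbf{1}$, conjugation by $E$ preserves operator ordering: from $\Lambda\leq\mathbf{1}$ one obtains $E\Lambda E\leq E^2$, and from the chain $E^2\leq E\leq\mathbf{1}$ one concludes $E\Lambda E\leq\mathbf{1}$; the lower bound $E\Lambda E\geq-\mathbf{1}$ follows identically from $\Lambda\geq-\mathbf{1}$, giving $E\Lambda E\geq-E^2\geq-\mathbf{1}$. Hence $-\mathbf{1}\leq E\Lambda E\leq\mathbf{1}$ for every admissible $\Lambda$.

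Finally I would combine these facts: for each $\Lambda$ with $-\mathbf{1}\leq\Lambda\leq\mathbf{1}$,
\begin{equation}
  \tr{\Lambda E\omega E}=\tr{E\Lambda E\,\omega}\leq\max_{-\mathbf{1}\leq\Lambda'\leq\mathbf{1}}\tr{\Lambda'\,\omega}=\norm{\omega}_1,
\end{equation}
where the last equality is again Lemma~\ref{trdist}. Taking the maximum over $\Lambda$ on the left-hand side yields $\norm{E\omega E}_1\leq\norm{\omega}_1$, which is the desired contractivity. I expect the only subtle point to be the chain $E^2\leq E\leq\mathbf{1}$, which relies on the fact that a positive contraction squares to an operator no larger than itself; the remainder is just the variational principle of Lemma~\ref{trdist} together with cyclicity of the trace.
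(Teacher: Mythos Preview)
Your proof is correct and follows essentially the same route as the paper's own proof: apply Lemma~\ref{trdist} to $E(\rho-\sigma)E$, use cyclicity of the trace to rewrite the test operator as $E\Lambda E$, observe that this new operator still satisfies the constraint $-\mathbf{1}\leq E\Lambda E\leq\mathbf{1}$, and bound by Lemma~\ref{trdist} applied to $\rho-\sigma$. If anything, your justification of the operator bound on $E\Lambda E$ via the chain $E\Lambda E\leq E^2\leq E\leq\mathbf{1}$ is slightly more careful than the paper's wording.
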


\begin{proof}
  Consider the expression of the trace norm of a hermitian operator as in Lemma 
  \ref{trdist} and apply it to the LHS of \eqref{contreq}:
  \begin{align}
    2\trd{E\rho E}{E\sigma E}&=\max_{-\mathbf{1}\leq\Lambda\leq\mathbf{1}}\tr{\Lambda 
    E(\rho-\sigma)E}\label{maxuno}\\
    &=\tr{\bar{\Lambda}E(\rho-\sigma)E}=\tr{\Lambda'(\rho-\sigma)}\\
&\leq\max_{-\mathbf{1}\leq\Lambda\leq\mathbf{1}}\tr{\Lambda(\rho-\sigma)}
    =2\trd{\rho}{\sigma}.\label{maxdue}
  \end{align}
  The second equality follows from explicitly using the operator $\bar{\Lambda}$ which 
  attains the maximum in \eqref{maxuno}. The third equality follows from using the cyclic property of the trace and
  setting $\Lambda'=E\bar{\Lambda} E$. The inequality follows from the 
  fact that also $\Lambda'$ is positive and less-than-one. 
\end{proof}

\begin{lemma}\label{Sen} 
(Sen's Lemma) Let $\rho$ be a subnormalized density matrix and  
  $P_1,\ldots,P_k$ orthogonal projectors on 
  subspaces of its Hilbert space. Let also $Q_i=\mathbf{1}-P_i$ be their complementary projectors. 
  Then 
  \begin{equation}
    Tr\left[P_k\ldots P_1\rho P_1\ldots P_k\right]\geq 
    \tr{\rho}-2\sqrt{\sum_{i=1}^k\tr{\rho Q_i}}.
  \end{equation}
  \end{lemma}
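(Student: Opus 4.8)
The plan is to reduce the mixed-state statement to a clean Hilbert-space estimate for pure vectors and then lift it back by convexity. Throughout I write $\Pi:=P_k\cdots P_1$, so that the quantity to be bounded from below is $\tr{\Pi\rho\Pi^\dagger}$, and I recall $Q_i=\mathbf{1}-P_i$. First I would dispose of the outer ``square'' by a positivity argument: since $(\mathbf{1}-\Pi)\rho(\mathbf{1}-\Pi)^\dagger\geq0$, taking the trace and using the hermiticity of $\rho$ gives $\tr{\Pi\rho\Pi^\dagger}\geq 2\,\mathrm{Re}\,\tr{\Pi\rho}-\tr{\rho}$, hence
\begin{equation}
\tr{\rho}-\tr{\Pi\rho\Pi^\dagger}\;\leq\;2\,\mathrm{Re}\,\tr{(\mathbf{1}-\Pi)\rho}.
\end{equation}
It therefore suffices to prove that $\mathrm{Re}\,\tr{(\mathbf{1}-\Pi)\rho}\leq\sqrt{\sum_{i}\tr{\rho Q_i}}$.

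The heart of the argument is a purely geometric inequality for an arbitrary vector $\ket{\psi}$, namely
\begin{equation}
\norm{(\mathbf{1}-\Pi)\ket{\psi}}_2^2\;\leq\;\sum_{i=1}^k\norm{Q_i\ket{\psi}}_2^2,\label{starineq}
\end{equation}
which I would establish by induction on $k$. Writing $\Pi^{(i)}=P_i\cdots P_1$, the telescoping identity $(\mathbf{1}-\Pi^{(k)})\ket{\psi}=(\mathbf{1}-\Pi^{(k-1)})\ket{\psi}+Q_k\Pi^{(k-1)}\ket{\psi}$ splits the left-hand vector into $u:=(\mathbf{1}-\Pi^{(k-1)})\ket{\psi}$, controlled by the inductive hypothesis, plus $w:=Q_k\Pi^{(k-1)}\ket{\psi}$. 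Expanding $\norm{u+w}^2$ and using $\langle\Pi^{(k-1)}\psi|w\rangle=\norm{w}^2$ together with $\langle\psi|w\rangle=\langle Q_k\psi|w\rangle$, the cross terms collapse to $2\,\mathrm{Re}\langle Q_k\psi|w\rangle-\norm{w}^2$, which is at most $\norm{Q_k\ket{\psi}}^2$ precisely because $\norm{Q_k\ket{\psi}-w}^2\geq0$.

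This completion-of-the-square step is the crucial point, and I expect it to be the only genuinely delicate part of the argument: it is where the \emph{non-commutativity} of the projectors is tamed. The naive route—bounding each term $\langle Q_i\psi|\Pi^{(i-1)}\psi\rangle$ separately by Cauchy--Schwarz, or invoking the gentle operator Lemma~\ref{gentop} together with the triangle inequality for the trace distance—fails, because the partially projected vectors $\Pi^{(i-1)}\ket{\psi}$ all have norm close to one, so a final Cauchy--Schwarz over the index $i$ would cost a spurious factor $\sqrt{k}$. Inequality \eqref{starineq} is exactly the statement that, thanks to the cancellations among the increments $Q_i\Pi^{(i-1)}\ket{\psi}$, no such factor appears.

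Finally I would lift \eqref{starineq} to the mixed case. Diagonalising $\rho=\sum_m r_m\ket{\psi_m}\bra{\psi_m}$ with orthonormal $\ket{\psi_m}$ and $\sum_m r_m=\tr{\rho}\leq1$, Cauchy--Schwarz on each diagonal entry gives $\mathrm{Re}\langle\psi_m|(\mathbf{1}-\Pi)|\psi_m\rangle\leq\norm{(\mathbf{1}-\Pi)\ket{\psi_m}}$, which by \eqref{starineq} is bounded by $\sqrt{\sum_i\norm{Q_i\ket{\psi_m}}^2}$. A second Cauchy--Schwarz over $m$, using $\sum_m r_m\leq1$, then yields
\begin{equation}
\mathrm{Re}\,\tr{(\mathbf{1}-\Pi)\rho}\;\leq\;\sqrt{\tr{\rho}}\,\sqrt{\textstyle\sum_i\tr{\rho Q_i}}\;\leq\;\sqrt{\textstyle\sum_i\tr{\rho Q_i}},
\end{equation}
and combining this with the positivity bound of the first paragraph closes the proof. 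The two reductions bracketing the induction are routine applications of operator positivity and Cauchy--Schwarz, so the whole difficulty is concentrated in the correct bookkeeping of the cross term in \eqref{starineq}.
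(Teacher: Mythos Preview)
Your argument is correct. Note, however, that the paper does \emph{not} supply its own proof of Lemma~\ref{Sen}: it is stated with attribution to Sen and referenced to the preprint~arXiv:1109.0802, while Appendix~B only treats Lemmas~\ref{appclose}--\ref{trdist}. So there is no ``paper's proof'' to compare against here; what one can compare to is Sen's original argument.

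In that light, your route is essentially Sen's: first reduce the claim to the pure-vector inequality $\norm{(\mathbf{1}-\Pi)\ket{\psi}}^2\leq\sum_i\norm{Q_i\ket{\psi}}^2$, prove this by induction on $k$ via the telescoping $(\mathbf{1}-\Pi^{(k)})=(\mathbf{1}-\Pi^{(k-1)})+Q_k\Pi^{(k-1)}$ with a completion-of-the-square to absorb the cross term, and then pass to subnormalized mixed states by spectral decomposition and two applications of Cauchy--Schwarz. Your bookkeeping of the cross term is right: using that $Q_k$ is a projector one indeed gets $\langle\Pi^{(k-1)}\psi|w\rangle=\norm{w}^2$ and $\langle\psi|w\rangle=\langle Q_k\psi|w\rangle$, whence the extra piece beyond $\norm{u}^2$ equals $2\,\mathrm{Re}\langle Q_k\psi|w\rangle-\norm{w}^2\leq\norm{Q_k\psi}^2$. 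The only cosmetic remark is that the initial positivity reduction $\tr{\rho}-\tr{\Pi\rho\Pi^\dagger}\leq 2\,\mathrm{Re}\,\tr{(\mathbf{1}-\Pi)\rho}$ is a slight repackaging relative to Sen, who works directly with $\norm{\ket{\psi}-\Pi\ket{\psi}}^2$ in the pure case; the content is identical.
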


\section{The bisection protocol}\label{sec:BIS}
\subsection{Description of the protocol} In this subsection we introduce our decoding protocol (Fig. \ref{figure1}) which, given a density matrix
 extracted from a $N=e^{n R}$-element quantum code $\mathbf{C}$~(\ref{CODEQ}), generated by the source ${\cal E}$, tries to identify it by using a bisection method. The measurement  process comprises of ${u}_{\mbox{\tiny{F}}}= nR$ nested detection events, each aimed to recover one bit of information from the transmitted signal. 
 
As a preliminary step, Bob assigns an ordering of the codewords in $\mathbf{C}$, identifying each of them with
    a unique string of ${u}_{\mbox{\tiny{F}}}$ bits, $\vec{k} = (k_1,  k_2, \cdots, k_{{u}_{\mbox{\tiny{F}}}})$, e.g. by providing a binary representation of their label $\ell\in\{1,\dots,N\}$.  In particular the first bit of the string $\vec{k}$  identifies two distinct subsets of ${\mathbf{C}}$ containing each $N/2$ codewords: the subset 
    ${\mathbf{C}}^{(1)}_0$ formed by the codewords whose corresponding strings start with $k_1=0$, and the subset  ${\mathbf{C}}^{(1)}_{1}$ characterized by those for which instead $k_1=1$. The second bit of the string $\vec{k}$  is then used to  further halve ${\mathbf{C}}^{(1)}_{0}$  and ${\mathbf{C}}^{(1)}_{1}$. Specifically 
for  $k_1=0,1$, ${\mathbf{C}}^{(1)}_{k_1}$ is split into  the sub-subsets ${\mathbf{C}}^{(2)}_{k_1,k_2=0}$ and  ${\mathbf{C}}^{(2)}_{k_1,k_2=1}$  which includes the $N/4$ codewords whose bits strings have $k_1$ as first bit and $k_2=0$ and $k_2=1$ as second bit, respectively.  Proceeding along the same line 
Bob identifies hence a hierarchy  of subsets organized in ${u}_{\mbox{\tiny{F}}}$ sets, 
the $u$-th one being composed  by  $2^u$ disjoint subsets ${\mathbf{C}}^{(u)}_{k_1,k_2,\cdots, k_u}$ labelled by the indexes $k_1$, $k_2$, $\cdots$, $k_u$, and containing each $2^{{u}_{\mbox{\tiny{F}}}-u} = N/2^u$ codewords.
Specifically  ${\mathbf{C}}^{(u)}_{k_1,k_2,\cdots, k_u}$ is the set formed by the codewords whose identifier string $\vec{k}$
admits the value $k_1$ as first bit, the value $k_2$ as second bit, $\cdots$, and the value $k_u$ as the $u$-th bit. By construction    for all $u\in\{ 1, \cdots, {u}_{\mbox{\tiny{F}}}\}$
they  fulfill the identities
\begin{eqnarray} 
&&{\mathbf{C}}^{(u)}_{k_1,k_2,\cdots, k_{u-1},0}\; \bigcap \;   {\mathbf{C}}^{(u)}_{k_1,k_2,\cdots, k_{u-1},1} \; =\; \O \;, \label{inter} \\
&&{\mathbf{C}}^{(u)}_{k_1,k_2,\cdots, k_{u-1},0}\; \bigcup \; {\mathbf{C}}^{(u)}_{k_1,k_2,\cdots, k_{u-1},1}\; = \; {\mathbf{C}}^{(u-1)}_{k_1,k_2,\cdots, k_{u-1}} \;, \label{complete1} 
\end{eqnarray} 
and the completeness relation 
\begin{eqnarray} 
{\mathbf{C}} = \bigcup_{k_1, k_2, \cdots, k_{u} \in \{0,1\}} {\mathbf{C}}^{(u)}_{k_1,k_2,\cdots, k_{u}}\;.
\end{eqnarray} 

To recover which codeword Alice is transmitting,  Bob performs a sequence of ${u}_{\mbox{\tiny{F}}}$ concatenated measurements organized as  shown in 
Fig.~\ref{figure1}. 
The first of these measures  is aimed to determine  the value of the first bit $k_1$ of the bit string associated with  the transmitted codeword, i.e. it allows Bob to determine whether the codeword
is in the subset  ${\mathbf{C}}^{(1)}_0$ or in the subset  ${\mathbf{C}}^{(1)}_1$. The exact form of such procedure will be assigned in the following sections where three alternative examples of the scheme will be discussed in details: for the moment it is sufficient to observe that it can be described as a POVM ${\cal M}^{(1)}$  
 of  elements $N_{0}^{(1)}$, $N_{1}^{(1)}$ associated respectively to the outcomes $k_1=0$ and $k_1=1$, plus a null term $N_{null}^{(1)} = \mathbf{1} -N_{0}^{(1)} -N_{1}^{(1)}$ associated with the case in which no decision can be made on the value of $k_1$: if this event occurs simply Bob declares failure of the decoding procedure and stops the protocol (in the first implementation of the scheme we discuss  in Sec.~\ref{EX0}  this element is not present, which is equivalent to set $N_{null}^{(1)}=0$). 
 Once $k_1$ has been determined, Bob proceeds with the second step of the protocol aimed
 to recover the value of the bit $k_2$ of the transmitted codeword. To this purpose, conditioned on the value of $k_1\in \{0,1\}$ obtained in the previous step, Bob performs now a new POVM ${\cal M}_{k_1}^{(2)}$  aimed to determine whether the received codeword belongs to ${\mathbf{C}}^{(2)}_{k_1,0}$ or to ${\mathbf{C}}^{(2)}_{k_1,1}$.
Also  ${\cal M}_{k_1}^{(2)}$ is  characterized by three elements: $N_{k_1,0}^{(2)}$, $N_{k_1,1}^{(2)}$ corresponding to the cases $k_2=0$ and $k_2=1$ respectively, and  $N_{k_1,null}^{(2)} = \mathbf{1} -N_{k_1,0}^{(2)} -N_{k_1,1}^{(2)}$ corresponding to the failure event (again the
explicit expressions for these operators will be assigned later on). 
  The procedure iterates  till Bob either gets a failure event or recovers all the ${u}_{\mbox{\tiny{F}}}$ bits which identify the transmitted codeword.   Specifically, assuming that no failures have occurred in the first $u-1$ steps yielding the values $k_1$, $k_2$, $\cdots$, $k_{u-1}$ for the associated bits,  at the $u$-th step Bob 
  performs on the system a POVM ${\cal M}_{k_1,k_2, \cdots, k_{u-1}}^{(u)}$ of elements $N_{k_1,k_2, \cdots, k_{u-1}, 0}^{(u)}$, $N_{k_1,k_2, \cdots, k_{u-1}, 1}^{(u)}$, and 
  $N_{k_1,k_2, \cdots, k_{u-1},null}^{(u)} = \mathbf{1} -N_{k_1,k_2, \cdots, k_{u-1}, 0}^{(u)} -N_{k_1,k_2, \cdots, k_{u-1}, 1}^{(u)}$ to decide whether the received codeword belongs
  to set  ${\mathbf{C}}_{k_1,k_2, \cdots, k_{u-1},0}^{(u)}$ or to  ${\mathbf{C}}_{k_1,k_2, \cdots, k_{u-1},1}^{(u)}$.
  
\begin{figure}[h]
	\centering
	\includegraphics[width=.9\textwidth]{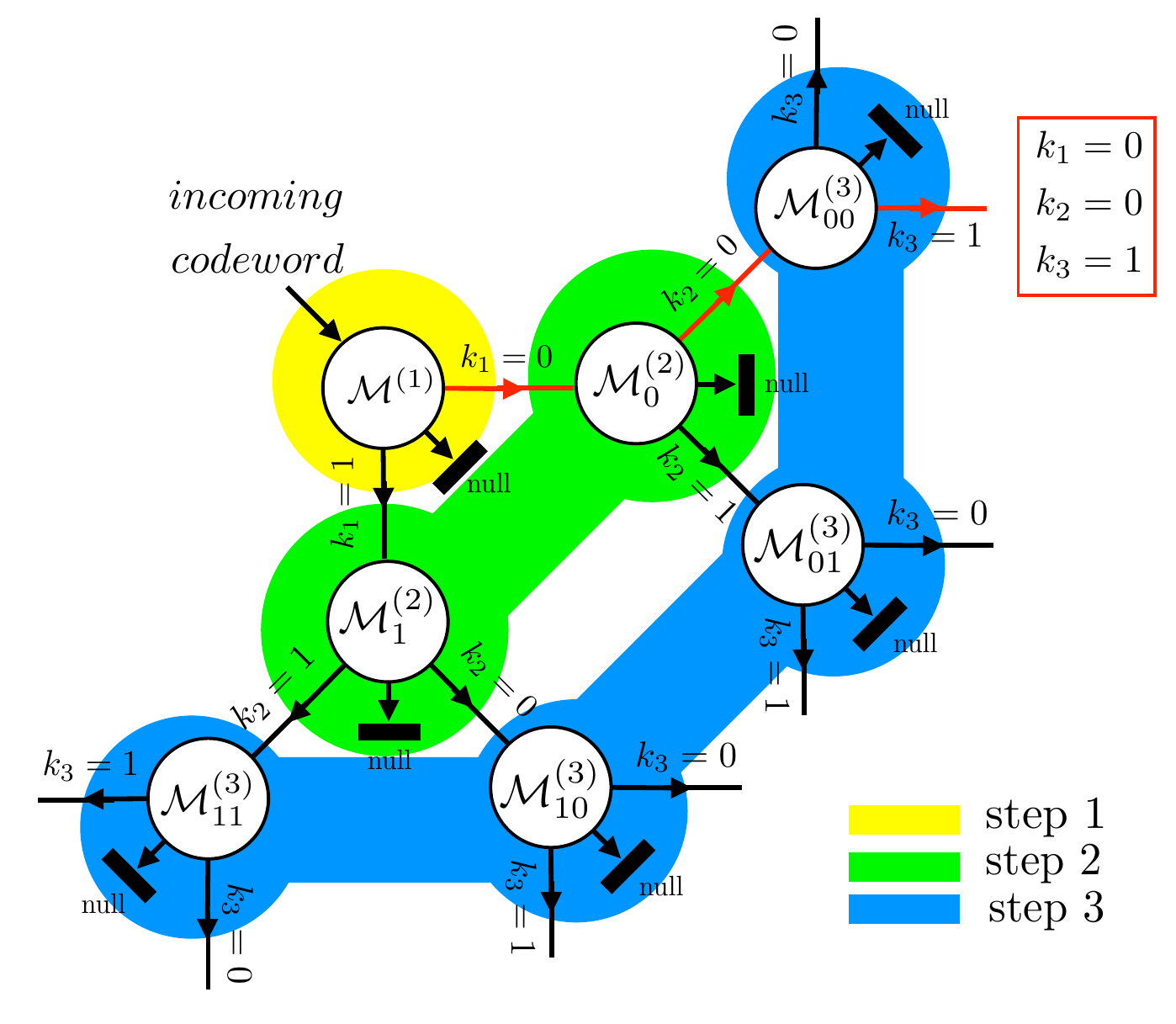}
	\caption{Schematic representation of the bisection decoding procedure. It consists in a sequence of adaptive measurements which are performed in series of 
	 ${u}_{\mbox{\tiny{F}}}$ concatenated steps, each being characterized by a POVM (the white circles) which admits three possible outcomes: two being associated respectively to 
	 the identification of the corresponding bit as  $0$ or $1$, and one, the {\it null} outcome, associated with the event where no decision can be made on the value of the bit.  
	The POVM to be performed at the $u$-th step depends upon the value of the bit obtained at the previous ones: for instance at the step number 2 Bob will perform
	either the POVM ${\cal M}_0^{(2)}$ or the POVM ${\cal M}_1^{(2)}$ depending on the value of $k_1$ he has obtained at the first step of the procedure, while at the step
	number $3$ Bob will perform  the POVMs ${\cal M}_{00}^{(3)}$, ${\cal M}_{01}^{(2)}$, ${\cal M}_{10}^{(3)}$, or ${\cal M}_{11}^{(2)}$ depending on the values of $k_1$ and $k_2$ obtained in the previous two steps.   The figure refers to the case of  ${u}_{\mbox{\tiny{F}}}=3$, the redline representing the trajectory which yields Bob to assign the binary string $\vec{k}=(0,0,1)$ to the received codeword.   }
	\label{figure1}
\end{figure}

 Given the above construction the probability of recovering a given string of bits  $\vec{k}=(k_1,k_2, 
 \cdots, k_{{u}_{\mbox{\tiny{F}}}})$ when measuring an input state $\rho_{\vec{j}}\in \mathbf{C}$ can now be expressed along the lines detailed in Appendix~\ref{conca}, i.e.   \begin{eqnarray} 
  P(\vec{k}| \rho_{\vec{j}}) = \mbox{Tr} [ F_{\vec{k}}\;  \rho_{\vec{j}}  ] \;, \label{fff111} 
  \end{eqnarray} 
  with $F_{\vec{k}}$ defined as 
   \begin{eqnarray}
    F_{\vec{k}}&=&\Big |    \sqrt{\misub{{u}_{\mbox{\tiny{F}}}}{k_1,k_2, \cdots, k_{{u}_{\mbox{\tiny{F}}}}}} \; \sqrt{\misub{{u}_{\mbox{\tiny{F}}}-1}{{k_1,k_2, \cdots, k_{{u}_{\mbox{\tiny{F}}}-1}}}}\cdots \sqrt{\misub{2}{{k}_{1},k_2}}\sqrt{\misub{1}{{k}_{1}}}\Big |^2 \;,
   \label{bisegenerica}
  \end{eqnarray}
  with $\{\misub{{u}}{k_1, \cdots, k_{{u}}}\}_{
u\in\{{1},\cdots,u_{F}=nR\}; k_{1},\cdots,k_{u-1}\in\{0,1\}}$ the operators which define the POVM's of the protocol. 
 The success probability of the procedure 
 follows then from  this 
expression by simply setting $\vec{k}$ to coincide with the binary string associated with the selected $\vec{j}$, e.g.  
in the case of the $\ell$-th codeword 
  \begin{eqnarray}
    {p_{succ}(\ell)}&=&P(\vec{k}^{(\ell)}| \rho_{\vec{j
    }^{(\ell)}}) = \mbox{Tr} [ F_{\vec{k}^{(\ell)}} \rho_{\vec{j
    }^{(\ell)}} ] ,
    \label{psucc}
  \end{eqnarray}
  where $\vec{k}^{(\ell)}$ is the binary string corresponding to the index $\ell$ which defines the selected vector~$\vec{j
    }^{(\ell)}$.

 \subsection{Attainability of the Holevo Bound via a bisection protocol}  \label{sec:succprobA}

 As detailed in the previous paragraphs a bisection protocol aimed to decode a $N=2^{nR}$-codewords quantum code  $\mathbf{C}$  is defined by assigning a family of three-outcomes POVM's $\mathcal{M}^{(u)}_{k_{1},\cdots,k_{u-1}}$  identified by the integer index $u\in\{{1},\cdots,u_{F}=nR\}$  and by the binary labels  $k_{1},\cdots,k_{u-1}\in\{0,1\}$ and concatenated as schematically shown in Fig.~\ref{figure1}.
 In this section we are going to show that, as long as the rate $R$ respects the inequality~(\ref{bound}), 
 it is possible to assign the operators $\{\misub{{u}}{k_1, \cdots, k_{{u}}}\}_{
u\in\{{1},\cdots,u_{F}=nR\}; k_{1},\cdots,k_{u-1}\in\{0,1\}}$
 which define the  measures $\{ \mathcal{M}^{(u)}_{k_{1},\cdots,k_{u-1}}\}_{
u\in\{{1},\cdots,u_{F}=nR\}; k_{1},\cdots,k_{u-1}\in\{0,1\}}
 $ in such a way that, in the limit of large $n$, the corresponding success probability~(\ref{psucc}) converges asymptotically to 1 when averaged over all possible codes generated by the output source ${\cal E}$, i.e. 
   \begin{eqnarray}\label{dopolemma12323}
 \lim_{n \rightarrow \infty} \langle p_{succ}(\ell)\rangle_{\cal S} =1 \;.
    \end{eqnarray}
   Accordingly the corresponding average error probability~(\ref{PERRC}) asymptotically nullifies, 
   proving hence that bisection decoding procedures can be used to  saturate the Holevo bound. 
In order to achieve this goal we start by presenting a sufficient condition  on $N^{(u)}_{k_{1},\cdots,k_{u}}$ which, if fulfilled, would yield the limit~(\ref{dopolemma12323}) independently of the value of $R$, see Theorem~\ref{MainTh}. Subsequently we show that  for all
rates $R$  respecting the Holevo Bound~(\ref{bound}) we can indeed fulfil such sufficient condition. This is done  by presenting three independent choices of  the operators $\{\misub{{u}}{k_1, \cdots, k_{{u}}}\}_{
u\in\{{1},\cdots,u_{F}=nR\}; k_{1},\cdots,k_{u-1}\in\{0,1\}}$, corresponding to three different ways of constructing the bisection scheme: via orthogonal projections (see Sec.~\ref{EX0}); via PGM detections (see Sec.~\ref{EX00}); and via sequential detections (see Sec.~\ref{EX1}). 
\newtheorem{Main}{Theorem}
\begin{Main}[Sufficient Condition] \label{MainTh}
For  $n$ integer let $\mathbf{C}$ be a  quantum code formed by $N=2^{nR}$ separable codewords~(\ref{outputdens}) of length $n$ extracted from the output ensemble $\mathcal{E}$,  
and a bisection protocol with POVM's $\{ \mathcal{M}^{(u)}_{k_{1},\cdots,k_{u-1}}\}_{
u\in\{{1},\cdots,u_{F}=nR\}; k_{1},\cdots,k_{u-1}\in\{0,1\}}
 $ characterized by the
operators $\{\misub{{u}}{k_1, \cdots, k_{{u}}}\}_{
u\in\{{1},\cdots,u_{F}=nR\}; k_{1},\cdots,k_{u-1}\in\{0,1\}}$.
The corresponding success probability~(\ref{psucc}) 
 converges to one as in Eq.~(\ref{dopolemma12323}) when averaged over all possible codes generated by the output ensemble ${\cal E}$
 if, for all $\ell \in \{ 1, \cdots, N\}$ and $u\in \{ 1, \cdots, u_F\}$,  one of the following conditions is fulfilled
\item{i)} 
\begin{align}
\ave{\operatorname{Tr}\left[\misub{{u}}{k_1^{(\ell)}, \cdots, k^{(\ell)}_{{u}}}\rho_{\vec{j}^{(\ell)}}\right]}_{\mathcal{S}}\geq 1-\epsilon(n);\label{IMPOIMPO}
\end{align}
\item{ii)} 
\begin{align}
\ave{\operatorname{Tr}\left[\misub{{u}}{k_1^{(\ell)}, \cdots, k^{(\ell)}_{{u}}} P \rho_{\vec{j}^{(\ell)}}P\right]}_{\mathcal{S}}\geq 1-\epsilon(n),\label{IMPOIMPO11}
\end{align}
with $\epsilon(n)>0$ being a function that decreases asymptotically to zero faster than $1/n^2$ as $n$ goes to infinity and $P$ being the projector on the typical subspace of the average codeword associated with the output ensemble ${\cal E}$ 
(in the above expression $k_1^{(\ell)}, \cdots, k^{(\ell)}_{{u}}$ are the first $u$ elements of the binary string $\vec{k}^{(\ell)}$ which represents the codeword index $\ell$ that labels the density matrix $\rho_{\vec{j}^{(\ell)}}$).
\end{Main}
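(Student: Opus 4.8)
The plan is to recast the success probability as the trace of a state that has been filtered through the chain of ``correct'' measurement operators, and then to control how far this filtering moves the state in trace distance. Fix a codeword index $\ell$ and abbreviate $N_u := \misub{u}{k_1^{(\ell)},\cdots,k_u^{(\ell)}}$ for the operator selected along the correct branch at the $u$-th step; since each is a POVM element, $0\leq N_u\leq\mathbf{1}$ and hence $0\leq\sqrt{N_u}\leq\mathbf{1}$. Reading \eqref{bisegenerica} as $\theta=\sqrt{N_{u_F}}\cdots\sqrt{N_1}$ with $F_{\vec{k}^{(\ell)}}=\theta^\dagger\theta$, I introduce the subnormalized states $\rho_0:=\rho_{\vec{j}^{(\ell)}}$ and $\rho_u:=\sqrt{N_u}\,\rho_{u-1}\sqrt{N_u}$, so that by \eqref{psucc} one has $p_{succ}(\ell)=\tr{F_{\vec{k}^{(\ell)}}\rho_0}=\tr{\rho_{u_F}}$. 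Because $\tr{\rho_u}=\tr{N_u\rho_{u-1}}\leq\tr{\rho_{u-1}}$, the traces are non-increasing and $\tr{\rho_{u_F}}\leq\tr{\rho_0}=1$.

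The core estimate is that the trace distance between $\rho_{u_F}$ and $\rho_0$ accumulates only additively. Inserting the intermediate point $\sqrt{N_u}\rho_0\sqrt{N_u}$ and combining the triangle inequality with the contractivity of trace distance under $E=\sqrt{N_u}$ (Lemma~\ref{contra}), I obtain $\trd{\rho_u}{\rho_0}\leq\trd{\rho_{u-1}}{\rho_0}+g_u$ with $g_u:=\trd{\sqrt{N_u}\rho_0\sqrt{N_u}}{\rho_0}$; iterating from $\rho_0$ yields $\trd{\rho_{u_F}}{\rho_0}\leq\sum_{u=1}^{u_F}g_u$. Since $\rho_{u_F}-\rho_0$ is Hermitian, Lemma~\ref{trdist} bounds the difference of its traces by its trace norm, so $1-p_{succ}(\ell)=\tr{\rho_0}-\tr{\rho_{u_F}}\leq\norm{\rho_{u_F}-\rho_0}_1=2\trd{\rho_{u_F}}{\rho_0}\leq 2\sum_{u=1}^{u_F}g_u$. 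This additive (rather than multiplicative) accumulation is the crucial point, and it is exactly what makes the threshold $1/n^2$ sufficient: a cruder term-by-term estimate would insert an extra factor $u_F$ and force $\epsilon(n)=o(1/n^4)$.

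Under condition (i) the conclusion is then immediate. Averaging the last estimate over the code ensemble and applying the gentle operator Lemma~\ref{gentop} to \eqref{IMPOIMPO} (with $E=N_u$, $\rho=\rho_{\vec{j}^{(\ell)}}$, and the average taken over $\mathcal{S}$) gives $\ave{g_u}_{\mathcal{S}}\leq\sqrt{\epsilon(n)}$ for every $u$, whence $\ave{1-p_{succ}(\ell)}_{\mathcal{S}}\leq 2\sum_{u=1}^{u_F}\ave{g_u}_{\mathcal{S}}\leq 2u_F\sqrt{\epsilon(n)}=2nR\sqrt{\epsilon(n)}$. Because $\epsilon(n)$ decays faster than $1/n^2$, i.e. $n^2\epsilon(n)\to 0$, the right-hand side vanishes as $n\to\infty$, which is precisely \eqref{dopolemma12323}.

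For condition (ii) I repeat the telescoping with the projected initial state $\tilde{\rho}_0:=P\rho_{\vec{j}^{(\ell)}}P$ in place of $\rho_0$, obtaining $\tr{F_{\vec{k}^{(\ell)}}\tilde{\rho}_0}\geq\tr{P\rho_0 P}-2\sum_u\tilde{g}_u$ with $\tilde{g}_u:=\trd{\sqrt{N_u}\tilde{\rho}_0\sqrt{N_u}}{\tilde{\rho}_0}$ and $\ave{\tilde{g}_u}_{\mathcal{S}}\leq\sqrt{\epsilon(n)}$ following from Lemma~\ref{gentop} applied to \eqref{IMPOIMPO11}. It then remains to bridge from the projected chain back to the physical success probability: since $0\leq F_{\vec{k}^{(\ell)}}\leq\mathbf{1}$, Lemma~\ref{appclose} gives $p_{succ}(\ell)=\tr{F_{\vec{k}^{(\ell)}}\rho_0}\geq\tr{F_{\vec{k}^{(\ell)}}\tilde{\rho}_0}-2\trd{\rho_0}{\tilde{\rho}_0}$, and both the replacement error $\trd{\rho_0}{\tilde{\rho}_0}$ and $\tr{(\mathbf{1}-P)\rho_0}$ are controlled, after averaging, by the gentle operator Lemma applied to the typical projector $P$ through $\ave{\tr{P\rho_{\vec{j}^{(\ell)}}}}_{\mathcal{S}}=\tr{P\rho^{\otimes n}}\geq 1-\epsilon_1$ from \eqref{totaleuno}. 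Collecting the terms yields $\ave{1-p_{succ}(\ell)}_{\mathcal{S}}\leq 2u_F\sqrt{\epsilon(n)}+\epsilon_1+2\sqrt{\epsilon_1}$, and since $\epsilon_1=O(e^{-n})$ the typicality contributions vanish even against the polynomial $u_F=nR$ by \eqref{polyn}, again leaving $2nR\sqrt{\epsilon(n)}\to 0$. The main obstacle throughout is the bookkeeping of the accumulated disturbance: one must arrange the triangle-inequality and contractivity steps so that the errors add linearly in $u_F$, and in case (ii) one must commute the typical projector past the entire measurement chain while paying only the exponentially small cost $\sqrt{\epsilon_1}$.
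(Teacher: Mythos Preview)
Your argument is correct and follows essentially the same route as the paper: the telescoping via triangle inequality plus Lemma~\ref{contra}, followed by Lemma~\ref{gentop} applied to the per-step hypothesis, is exactly the paper's mechanism for Part~i). Your variant of the final step---bounding $\tr{\rho_0}-\tr{\rho_{u_F}}$ directly by $\norm{\rho_0-\rho_{u_F}}_1$ via Lemma~\ref{trdist}---replaces the paper's use of Lemma~\ref{appclose} with $E=M_{u_F}^2$ to peel off the last measurement; this yields the bound $2u_F\sqrt{\epsilon(n)}$ in place of the paper's $\epsilon(n)+2(u_F-1)\sqrt{\epsilon(n)}$, an immaterial difference.

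For Part~ii) there is a mild organizational difference worth noting. The paper does not redo the telescoping: it simply proves that \eqref{IMPOIMPO11} implies \eqref{IMPOIMPO} at the level of the single-step hypothesis, by applying Lemma~\ref{appclose} with $E=N_u$, $\rho=\rho_{\vec{j}^{(\ell)}}$, $\sigma=P\rho_{\vec{j}^{(\ell)}}P$ together with $\ave{\trd{\rho_{\vec{j}^{(\ell)}}}{P\rho_{\vec{j}^{(\ell)}}P}}_{\cal S}\leq\sqrt{\epsilon_1}$, and then invokes Part~i). Your approach instead reruns the entire chain with $\tilde\rho_0=P\rho_{\vec j^{(\ell)}}P$ and bridges back once at the end via Lemma~\ref{appclose} applied to $F_{\vec k^{(\ell)}}$. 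Both are valid and give the same final estimate; the paper's reduction is slightly more economical since it avoids duplicating the telescoping argument.
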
 

\begin{proof} 
We start by directly proving  that Eq.~(\ref{IMPOIMPO}) is a sufficient condition for Eq.~(\ref{dopolemma12323}), part i) of the theorem.  Part ii) of the theorem is then obtained by showing that 
Eq.~(\ref{IMPOIMPO11}) implies Eq.~(\ref{IMPOIMPO}). 

{\it Part i)}: The success probability~(\ref{succ}) that an element $\rho_{\vec{j}^{(\ell)}} \in {\mathbf C}$ will be correctly decoded by the bisection procedure characterized by the operators $\{\misub{{u}}{k_1, \cdots, k_{{u}}}\}_{
u\in\{{1},\cdots,u_{F}=nR\}; k_{1},\cdots,k_{u-1}\in\{0,1\}}$ can be computed as in Eq.~(\ref{psucc}) 
with  $\vec{k}^{(\ell)}$ being the identifying bit string that Bob has assigned to the $\ell$-th codeword.
  To put a bound on the average value of this quantity over the collection ${\cal S}$ of quantum codes emitted by the source ${\cal E}$,  we 
 observe that
  \begin{eqnarray}
    \Big\langle p_{succ}(\ell)\Big\rangle_{\cal S}
     &=& 
    \ave{\tr{    M^2_{{u}_{\mbox{\tiny{F}}}}  \; M_{{u}_{\mbox{\tiny{F}}}-1}
  \; \cdots  \; M_{{1}}\; 
    \rho_{\vec{j}^{(\ell)}} \; M_{{1}}\; 
 \cdots\;M_{{u}_{\mbox{\tiny{F}}}-1}
    }}_{\cal S}  \nonumber \\ &\geq&\ave{\tr{   M^2_{{u}_{\mbox{\tiny{F}}}}    \rho_{\vec{j}^{(\ell)}}}-2\trd{M_{{u}_{\mbox{\tiny{F}}}-1}\ldots M_1\rho_{\vec{j}^{(\ell)} }M_1\ldots 
M_{{u}_{\mbox{\tiny{F}}}-1}}{\rho_{\vec{j}^{(\ell)}}}}_{\cal S}, \label{EQU1}
    \end{eqnarray}
    where  
for easiness of notation we introduced $M_u=\sqrt{\misub{{u}}{k_1^{(\ell)}, \cdots, k^{(\ell)}_{{u}}}}$ and applied Lemma \ref{appclose} with 
  $E=M^2_{{u}_{\mbox{\tiny{F}}}} $,
 $\rho=  M_{{u}_{\mbox{\tiny{F}}}-1}
  \; \cdots  \; M_{{1}}\; 
    \rho_{\vec{j}^{(\ell)}} \; M_{{1}}\; 
 \cdots\;M_{{u}_{\mbox{\tiny{F}}}-1}$, 
    and $\sigma= \rho_{\vec{j}^{(\ell)}}$.
        By use of the triangular inequality we also observe that 
    \begin{eqnarray}
  \label{ufmenouno}   
  && \trd{M_{{u}_{\mbox{\tiny{F}}}-1}\ldots M_1\rho_{\vec{j}^{(\ell)} }M_1\ldots 
M_{{u}_{\mbox{\tiny{F}}}-1}}{\rho_{\vec{j}^{(\ell)} }}
\nonumber\\ && \qquad  \leq \trd{M_{{u}_{\mbox{\tiny{F}}}-1}\rho_{\vec{j}^{(\ell)} }M_{{u}_{\mbox{\tiny{F}}}-1}}{\rho_{\vec{j}^{(\ell)} }} + 
\trd{M_{{u}_{\mbox{\tiny{F}}}-1}\ldots M_1\rho_{\vec{j}^{(\ell)} }M_1\ldots 
M_{{u}_{\mbox{\tiny{F}}}-1}}{M_{{u}_{\mbox{\tiny{F}}}-1}\rho_{\vec{j}^{(\ell)} }M_{{u}_{\mbox{\tiny{F}}}-1}} \nonumber \\
&&\qquad  \leq \trd{M_{{u}_{\mbox{\tiny{F}}}-1}\rho_{\vec{j}^{(\ell)} }M_{{u}_{\mbox{\tiny{F}}}-1}}{\rho_{\vec{j}^{(\ell)} }}
+ 
\trd{M_{{u}_{\mbox{\tiny{F}}}-2}\ldots M_1\rho_{\vec{j}^{(\ell)} }M_1\ldots 
M_{{u}_{\mbox{\tiny{F}}}-2}}{\rho_{\vec{j}^{(\ell)} }} \nonumber \\
&&\qquad  \leq 
\sum_{u=1}^{{u}_{\mbox{\tiny{F}}}-1}\trd{M_u\rho_{\vec{j}^{(\ell)}}M_u}{\rho_{\vec{j}^{(\ell)}}},
    \end{eqnarray}
 where the second inequality follows from Lemma \ref{contra} while the third one by direct iteration of the previous passages.  
Replaced into Eq.~(\ref{EQU1})   this finally yields 
    \begin{align}\label{dopolemma}
 \Big\langle p_{succ}(\ell)\Big\rangle_{\cal S}\geq\ave{\tr{   M^2_{{u}_{\mbox{\tiny{F}}}}      \rho_{\vec{j}^{(\ell)}}}}_{\cal S}
 -2 \sum_{u=1}^{{u}_{\mbox{\tiny{F}}}-1}\ave{\trd{M_u\rho_{\vec{j}^{(\ell)}}M_u}{\rho_{\vec{j}^{(\ell)}}}}_{\cal S} .
    \end{align}
Assume now that Eq.~(\ref{IMPOIMPO})  holds. Accordingly  for all $u\in\{ 1,\cdots, {u}_{\mbox{\tiny{F}}}\}$, $\ell=1,\cdots,N$ we have 
\begin{eqnarray} 
\ave{\tr{M_u^2{\rho}_{\vec{j}^{(\ell)}}}}_{\cal S}\geq 1- \epsilon(n), 
\end{eqnarray}
with $\epsilon(n)$ being a positive function which goes to zero faster than $1/n^2$. 
Then thanks to Lemma~\ref{gentop} we can write 
  \begin{align}\label{dopolemma1}
 \Big\langle p_{succ}(\ell)\Big\rangle_{\cal S}\geq 1- \epsilon(n) -2 n R \sqrt{\epsilon(n)} ,
    \end{align}
which forces  $\Big\langle p_{succ}(\ell)\Big\rangle_{\cal S}$  to converge to 1 as $n\rightarrow \infty$. This shows that   Eq.~(\ref{IMPOIMPO}) is indeed a  sufficient condition for ~(\ref{dopolemma12323}). 

{\it Part ii):} To prove  that  Eq.~(\ref{IMPOIMPO11}) is a sufficient condition for~(\ref{dopolemma12323}) we invoke 
 Lemma \ref{appclose} with $E=M_{u}^{2}$, $\rho=\rho_{\vec{j}^{(\ell)}}$ 
  and $\sigma=P {\rho}_{\vec{j}^{(\ell)}}P$ obtaining the following inequality 
  \begin{eqnarray} \label{misuraSmoothed}
 \ave{\tr{M_u^2{\rho}_{\vec{j}^{(\ell)}}}}_{\cal S}\geq\ave{\tr{M_u^2 P {\rho}_{\vec{j}^{(\ell)}}P}}_{\cal S} -{2D\left(\rho_{\vec{j}^{(\ell)}},P{\rho}_{\vec{j}^{(\ell)}}P\right)}.
  \end{eqnarray}
From   Eq.~\eqref{totaleuno} we also know that 
        for $n$ sufficiently large and $\epsilon_1=O(e^{-n})$ one has 
     \begin{eqnarray}
       \ave{\tr{P\rho_{\vec{j}^{(\ell)} }}}_{\cal S}=\tr{P\ave{\rho_{\vec{j}^{(\ell)} }}_{\cal S}}&=&\tr{P\rho^{\otimes n}}\geq 1-\epsilon_1,\label{genttotaluno}
     \end{eqnarray}
     where we used the fact that the average over ${\cal S}$ of the $\ell$-th codeword corresponds to the average with respect to the joint probability~(\ref{jointp}) of 
     $\rho_{\vec{j}}$, i.e. 
     \begin{equation}
       \ave{\rho_{\vec{j}^{(\ell)}}}_{\cal S} =  \sum_{\vec{j}}p_{\vec{j}}\rho_{\vec{j}}=\rho^{\otimes 
       n}.
     \end{equation}
     Accordingly via 
Lemma \ref{gentop} we can conclude that 
\begin{eqnarray} 
 \ave{D\left(\rho_{\vec{j}^{(\ell)}},P{\rho}_{\vec{j}^{(\ell)}}P\right)}_{\cal S}\leq\sqrt{\epsilon_1},\label{genttotaldue}
     \end{eqnarray}
which inserted in  Eq.~(\ref{misuraSmoothed}) finally yields  
  \begin{eqnarray}
    \ave{\tr{M_u^2{\rho}_{\vec{j}^{(\ell)}}}}_{\cal S}\geq\ave{\tr{M_u^2 P {\rho}_{\vec{j}^{(\ell)}}P}}_{\cal S} -2\sqrt{\epsilon_1},
   \label{smoonosmoo}
  \end{eqnarray}
which thanks to~(\ref{IMPOIMPO11}) implies the inequality~(\ref{IMPOIMPO}) and hence, via part i)  of the theorem, Eq.~(\ref{dopolemma12323}).
\end{proof}

 Thanks to Theorem~\ref{MainTh} we can now prove that bisection protocols allows one to  attain the Holevo bound, by 
 showing that, for all rates $R$ respecting~(\ref{bound}), it is possible to identify operators $\{\misub{{u}}{k_1, \cdots, k_{{u}}}\}_{
u\in\{{1},\cdots,u_{F}=nR\}; k_{1},\cdots,k_{u-1}\in\{0,1\}}$ fulfilling Eq.~(\ref{IMPOIMPO11}). 
 Ideally one way of building the POVMs  ${\cal M}_{k_1,k_2,\cdots, k_{u-1}}^{(u)}$ which define the bisection decoding procedure, would be to
  identify its elements $N_{k_1,k_2, \cdots, k_{u-1}, 0}^{(u)}$, $N_{k_1,k_2, \cdots, k_{u-1}, 1}^{(u)}$ with  the projectors on the subspaces spanned by the codewords of sets ${\mathbf C}^{(u)}_{k_1,k_2,\cdots,k_{u-1}, 0}$ and ${\mathbf C}^{(u)}_{k_1,k_2,\cdots,k_{u-1}, 1}$
    respectively. This is not possible however due to the fact that such spaces are in general not orthogonal, though we expect typical subspaces of different codewords of the source 
  to be disjoint in the long $n$ limit: some kind of regularization is hence necessary. 
 In the following we shall present three alternative, yet asymptotically equivalent,  ways to realize this:  
the first makes use of orthogonal projections on subspaces identified by treating  asymmetrically the set ${\mathbf C}^{(u)}_{k_1,k_2,\cdots,k_{u-1}, 0}$ and ${\mathbf C}^{(u)}_{k_1,k_2,\cdots,k_{u-1}, 1}$, 
the second is based on the PGM construction, and finally the third makes use of the POVM elements of the sequential protocol of Refs.~\onlinecite{seq1,seq2,sen}. 

 \subsubsection{Method  1: orthogonal projections}\label{EX0}
Consider the set ${\mathbf C}^{(u)}_{k_1,k_2,\cdots,k_{u-1}, 0}$. For each  one of its codewords $\rho_{\vec{j}^{(\ell)}}$   we can associate a typical subspace 
$\mathcal{H}^{\vec{j}^{(\ell)}}_{typ}$ and a corresponding projector $P_{\vec{j}^{(\ell)}}$ 
along the lines detailed in Sec.~\ref{sec:tip}. Next we construct the subspace ${\cal H}^{(u)}_{k_1,k_2,\cdots,k_{u-1}, 0}$ spanned by the vectors  which 
can be written as a direct sum of the elements of the $\mathcal{H}^{\vec{j}^{(\ell)}}_{typ}$s of ${\mathbf C}^{(u)}_{k_1,k_2,\cdots,k_{u-1}, 0}$, i.e. 
\begin{eqnarray} 
{\cal H}^{(u)}_{k_1,k_2,\cdots,k_{u-1}, 0} := \bigoplus_{\ell \in {\mathbf C}^{(u)}_{k_1,k_2,\cdots,k_{u-1}, 0}} \mathcal{H}^{\vec{j}^{(\ell)}}_{typ}
\end{eqnarray}  
where the sum  is performed over the $\ell$s whose corresponding  vector $\rho_{\vec{j}^{(\ell)}}$ belongs to the set   ${\mathbf C}^{(u)}_{k_1,k_2,\cdots, k_{u-1}, 0}$.
 By construction it follows that each one of the $\mathcal{H}^{\vec{j}^{(\ell)}}_{typ}$ associated to  ${\mathbf C}^{(u)}_{k_1,k_2,\cdots,k_{u-1}, 0}$ are proper
 subspaces of ${\cal H}^{(u)}_{k_1,k_2,\cdots,k_{u-1}, 0}$. Accordingly, indicating with $P^{(u)}_{k_1,k_2,\cdots,k_{u-1}, 0}$ the projector on ${\cal H}^{(u)}_{k_1,k_2,\cdots,k_{u-1}, 0}$
 we have that 
 \begin{eqnarray} \label{NEWIMPO} 
 P^{(u)}_{k_1,k_2,\cdots,k_{u-1}, 0} \geq P_{\vec{j}^{(\ell)}},
 \end{eqnarray} 
for all $\ell \in {\mathbf C}^{(u)}_{k_1,k_2,\cdots,k_{u-1}, 0}$. Also due to the partial overlapping among the $\mathcal{H}^{\vec{j}^{(\ell)}}_{typ}$ of ${\mathbf C}^{(u)}_{k_1,k_2,\cdots,k_{u-1}, 0}$
the sum of  the associated $P_{\vec{j}^{(\ell)}}$s  will in general be larger than $P^{(u)}_{k_1,k_2,\cdots,k_{u-1}, 0}$, i.e. 
\begin{eqnarray} 
P^{(u)}_{k_1,k_2,\cdots,k_{u-1}, 0} \leq  \sum_{\ell \in {\mathbf C}^{(u)}_{k_1,k_2,\cdots,k_{u-1}, 0}}  P_{\vec{j}^{(\ell)}}. \label{NEWIMPO1} 
 \end{eqnarray} 
We define the orthogonal projections method for the bisection POVM by identifying $N_{k_1,k_2, \cdots, k_{u-1}, 0}^{(u)}$ with $P^{(u)}_{k_1,k_2,\cdots,k_{u-1}, 0}$ and 
$N_{k_1,k_2, \cdots, k_{u-1}, 1}^{(u)}$ with its complementary counterpart, i.e.
\begin{eqnarray}
N_{k_1,k_2, \cdots, k_{u-1}, 0}^{(u)} &:=& P^{(u)}_{k_1,k_2,\cdots,k_{u-1}, 0} ,\label{orthOne} \\
N_{k_1,k_2, \cdots, k_{u-1}, 1}^{(u)} &:=&  Q^{(u)}_{k_1,k_2,\cdots,k_{u-1}, 0} = \mathbf 1 - P^{(u)}_{k_1,k_2,\cdots,k_{u-1}, 0}. \label{orthTwo}
\end{eqnarray} 
A couple of remarks are mandatory:
\begin{itemize}
\item[i)] notice that  $N_{k_1,k_2, \cdots, k_{u-1}, 1}^{(u)}$ does  not coincide with the projector $P^{(u)}_{k_1,k_2,\cdots,k_{u-1}, 1}$ on the subspace ${\cal H}^{(u)}_{k_1,k_2,\cdots,k_{u-1}, 1}$
formed by the direct sum of the typical subspaces   $\mathcal{H}^{\vec{j}^{(\ell)}}_{typ}$ associated with ${\mathbf C}^{(u)}_{k_1,k_2,\cdots,k_{u-1}, 1}$.
Notice also that, due to the partial overlapping of the typical subspaces of different codewords, in general we can neither  establish
 an inequality similar to Eq.~(\ref{NEWIMPO}) which links  $N_{k_1,k_2, \cdots, k_{u-1}, 1}^{(u)}$  and the $ P_{\vec{j}^{(\ell)}}$ of ${\cal H}^{(u)}_{k_1,k_2,\cdots,k_{u-1}, 1}$,
nor fix an ordering between $N_{k_1,k_2, \cdots, k_{u-1}, 1}^{(u)}$ and 
$P^{(u)}_{k_1,k_2,\cdots,k_{u-1}, 1}$;
\item[ii)] by construction the scheme we are analyzing here does not include the possibility of the null event described in the previous section. Indeed in this case we have 
\begin{eqnarray} 
N_{k_1,k_2, \cdots, k_{u-1}, null}^{(u)}= \mathbf{1} - N_{k_1,k_2, \cdots, k_{u-1}, 0}^{(u)}-N_{k_1,k_2, \cdots, k_{u-1}, 1}^{(u)}=0.
\end{eqnarray} 
The associated set  POVM ${\cal M}_{k_1,k_2,\cdots, k_{u-1}}^{(u)}$ is thus a projective measurement which admits only two possible outcomes, $k_u=0$  and $k_u=1$. 
\end{itemize}

From Theorem~\ref{MainTh} 
 the asymptotic attainability of the Holevo bound with this procedure  can be established by 
   showing that  Eq.~(\ref{IMPOIMPO11}) holds, i.e. explicitly 
\begin{lemma}
For all rates $R$ satisfying the Holevo bound~(\ref{bound})
the bisection scheme associated with operators $\{\misub{{u}}{k_1, \cdots, k_{{u}}}\}$
defined as in Eqs.~(\ref{orthOne}), (\ref{orthTwo}) fullfils the sufficient condition~\textnormal{(\ref{IMPOIMPO11})}.
\end{lemma}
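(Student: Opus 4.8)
The plan is to verify the sufficient condition~(\ref{IMPOIMPO11}) separately for the two possible values of the last recovered bit $k_u^{(\ell)}$, since the operators~(\ref{orthOne}) and~(\ref{orthTwo}) enter asymmetrically. The key structural fact I will exploit is that the $\cal S$-average of the $\ell$-th codeword factorizes over distinct indices: because the codewords $\vec{j}^{(\ell)}$ are drawn independently according to $p_{\vec{j}}$ while the labelling $\ell\mapsto\vec{k}^{(\ell)}$ is fixed, for $\ell'\neq\ell$ the operators $P_{\vec{j}^{(\ell')}}$ and $\rho_{\vec{j}^{(\ell)}}$ are statistically independent, so that inside a trace $\ave{P_{\vec{j}^{(\ell')}}}_{\cal S}=\sum_{\vec{j}}p_{\vec{j}}P_{\vec{j}}=:\bar P$ and $\ave{\rho_{\vec{j}^{(\ell)}}}_{\cal S}=\rho^{\otimes n}$ decouple.

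I first treat the easy case $k_u^{(\ell)}=0$, where $\ell\in\mathbf{C}^{(u)}_{\cdots,0}$ and $\misub{u}{\cdots,0}=P^{(u)}_{\cdots,0}$. By the operator inequality~(\ref{NEWIMPO}) one has $P^{(u)}_{\cdots,0}\geq P_{\vec{j}^{(\ell)}}$, and since $P\rho_{\vec{j}^{(\ell)}}P\geq0$ this yields $\tr{P^{(u)}_{\cdots,0}P\rho_{\vec{j}^{(\ell)}}P}\geq\tr{P_{\vec{j}^{(\ell)}}P\rho_{\vec{j}^{(\ell)}}P}$. I then apply Lemma~\ref{appclose} with $E=P_{\vec{j}^{(\ell)}}$, $\rho=P\rho_{\vec{j}^{(\ell)}}P$ and $\sigma=\rho_{\vec{j}^{(\ell)}}$ to get $\tr{P_{\vec{j}^{(\ell)}}P\rho_{\vec{j}^{(\ell)}}P}\geq\tr{P_{\vec{j}^{(\ell)}}\rho_{\vec{j}^{(\ell)}}}-2\trd{P\rho_{\vec{j}^{(\ell)}}P}{\rho_{\vec{j}^{(\ell)}}}$. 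Averaging over $\cal S$ and combining the conditional-typicality bound~(\ref{parzialeuno}) with the gentle-operator estimate~(\ref{genttotaldue}) gives $\ave{\tr{\misub{u}{\cdots,0}P\rho_{\vec{j}^{(\ell)}}P}}_{\cal S}\geq1-\epsilon_2-2\sqrt{\epsilon_1}$, which is of the required form since $\epsilon_1,\epsilon_2=O(e^{-n})$.

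The main obstacle is the case $k_u^{(\ell)}=1$, where $\ell\notin\mathbf{C}^{(u)}_{\cdots,0}$ and $\misub{u}{\cdots,1}=\mathbf{1}-P^{(u)}_{\cdots,0}$. Here I split $\ave{\tr{(\mathbf{1}-P^{(u)}_{\cdots,0})P\rho_{\vec{j}^{(\ell)}}P}}_{\cal S}=\ave{\tr{P\rho_{\vec{j}^{(\ell)}}P}}_{\cal S}-\ave{\tr{P^{(u)}_{\cdots,0}P\rho_{\vec{j}^{(\ell)}}P}}_{\cal S}$; the first term equals $\tr{P\rho^{\otimes n}}\geq1-\epsilon_1$ by~(\ref{totaleuno}). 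For the second I use the upper bound~(\ref{NEWIMPO1}), $P^{(u)}_{\cdots,0}\leq\sum_{\ell'\in\mathbf{C}^{(u)}_{\cdots,0}}P_{\vec{j}^{(\ell')}}$, so that the term is at most $\sum_{\ell'}\tr{P_{\vec{j}^{(\ell')}}P\rho_{\vec{j}^{(\ell)}}P}$. Every $\ell'$ in the sum differs from $\ell$, so the factorization noted above turns each averaged summand into $\tr{\bar P\,P\rho^{\otimes n}P}$. Invoking the spectral bound~(\ref{totaletre}), $P\rho^{\otimes n}P\leq2^{-n[S(\rho)-\delta]}P$, together with $\tr{\bar P P}\leq\tr{\bar P}=\sum_{\vec{j}}p_{\vec{j}}\tr{P_{\vec{j}}}\leq2^{n[\sum_{j}p_jS(\rho_j)+\delta]}$ from~(\ref{parzialedue}), I obtain $\tr{\bar P\,P\rho^{\otimes n}P}\leq2^{-n[\chi-2\delta]}$, with $\chi=S(\rho)-\sum_{j}p_jS(\rho_j)$ the Holevo information of $\cal E$. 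Summing the at most $N/2^u\leq N=2^{nR}$ contributions gives $\ave{\tr{P^{(u)}_{\cdots,0}P\rho_{\vec{j}^{(\ell)}}P}}_{\cal S}\leq2^{-n[\chi-R-2\delta]}$.

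This is precisely where the Holevo bound is used: choosing $\delta$ small enough that $R+2\delta<\chi$ — possible exactly when $R<\chi$, with the supremum over ensembles then covering all $R<C_{Hol}$ — makes the exponent strictly positive, so the crosstalk term is exponentially small and in particular $o(1/n^2)$. Collecting the two estimates gives $\ave{\tr{\misub{u}{k_1^{(\ell)},\cdots,k_u^{(\ell)}}P\rho_{\vec{j}^{(\ell)}}P}}_{\cal S}\geq1-\epsilon(n)$ uniformly in $\ell$ and $u$, with $\epsilon(n)=\max\{\epsilon_2+2\sqrt{\epsilon_1},\,\epsilon_1+2^{-n[\chi-R-2\delta]}\}$ vanishing faster than $1/n^2$, which is the claim. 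I expect the genuinely delicate points to be the independence/factorization step and the union bound over the $N/2^u$ competing codewords, since it is the balance between the exponential smallness $2^{-n\chi}$ of each typical-subspace overlap and the $2^{nR}$ number of rival codewords that forces the rate below the Holevo quantity.
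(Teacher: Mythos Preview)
Your proposal is correct and follows essentially the same approach as the paper: both proofs split into the $k_u=0$ and $k_u=1$ cases, use the operator inequalities~(\ref{NEWIMPO}) and~(\ref{NEWIMPO1}) respectively, invoke Lemma~\ref{appclose} together with~(\ref{parzialeuno}) and the gentle-operator estimate for the first case, and for the second case exploit the independence of distinct codewords under the $\cal S$-average combined with the typicality bounds~(\ref{totaletre}) and~(\ref{parzialedue}) to control the crosstalk term by $2^{-n[\chi-2\delta]}$ per competing codeword. The only cosmetic difference is that the paper first averages $\rho_{\vec{j}^{(\ell)}}$ and then bounds via $\|P\rho^{\otimes n}P\|_\infty$, whereas you first average $P_{\vec{j}^{(\ell')}}$ and then use the operator inequality $P\rho^{\otimes n}P\leq2^{-n[S(\rho)-\delta]}P$; the two computations are equivalent.
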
  
\begin{proof}
   Consider first the case with $k_u=0$. Given then a generic codeword $\rho_{\vec{j}^{(\ell)}}$ of  ${\mathbf C}^{(u)}_{k_1,k_2,\cdots, k_{u-1}, 0}$
 we can write 
 \begin{eqnarray}  \label{IMPOIMPO111}
&& \ave{\tr{N_{k_1^{(\ell)},k_2^{(\ell)}, \cdots, k_{u-1}^{(\ell)}, 0}^{(u)}\;  \bar{\rho}_{\vec{j}^{(\ell)}}}}_{\cal S}=\ave{\tr{P_{k_1^{(\ell)},k_2^{(\ell)}, \cdots, k_{u-1}^{(\ell)}, 0}^{(u)}\;  \bar{\rho}_{\vec{j}^{(\ell)}}}}_{\cal S} \geq \ave{\tr{P_{\vec{j}^{(\ell)}}
\;  \bar{\rho}_{\vec{j}^{(\ell)}}}}_{\cal S} \nonumber \\
&&= \sum_{\vec{j}} p_{\vec{j}} \;  {\tr{P_{\vec{j}}\bar{\rho}_{\vec{j}}}} \geq 
 \sum_{\vec{j}} p_{\vec{j}} \;\Big(  {\tr{P_{\vec{j} }\rho_{\vec{j}}}} -2 {\trd{\bar{\rho}_{\vec{j}}}{\rho_{\vec{j}}}}\Big) 
      \geq 1 - \epsilon_2 - 2\sqrt{\epsilon_1},
\end{eqnarray}
where for easy of notation we set  $\bar{\rho}_{\vec{j}^{(\ell)}}:= P {\rho}_{\vec{j}^{(\ell)}}P$  and where 
 we used the fact that taking the average with respect to the statistical collection ${\cal S}$ of $\tr{P_{\vec{j}^{(\ell)}}
\;  \bar{\rho}_{\vec{j}^{(\ell)}}}$ is equivalent to taking the average of  
${\tr{P_{\vec{j}}\bar{\rho}_{\vec{j}}}}$ with respect to $p_{\vec{j}}$, i.e. 
\begin{eqnarray} \label{prima} 
\ave{\tr{P_{\vec{j}^{(\ell)}}
\;  \bar{\rho}_{\vec{j}^{(\ell)}}}}_{\cal S} = \sum_{\vec{j}} p_{\vec{j}} \;  {\tr{P_{\vec{j}}\bar{\rho}_{\vec{j}}}}\;.
\end{eqnarray} 
The first inequality of Eq.~(\ref{IMPOIMPO111}) follows  from Eq.~(\ref{NEWIMPO}); 
the second  inequality follows instead from applying Lemma \ref{appclose} with 
    $E=P_{\vec{j}}$, $\rho=\bar{\rho}_{\vec{j}}$ and $\sigma=\rho_{\vec{j}}$; while finally 
the third inequality     follows  both from the high probability of projecting 
    codeword $\rho_{\vec{j}}$ on its conditionally typical subspace \eqref{parzialeuno}
    and from the same concept for the average codeword, together with Lemma \ref{gentop} (as in (\ref{genttotaluno},\ref{genttotaldue})),
    the parameters $\epsilon_1$ and $\epsilon_2$ being both exponentially small in $n$ to guarantee the limit property~(\ref{polyn}). 
    Equation~(\ref{IMPOIMPO111}) proves hence that Eq.~(\ref{IMPOIMPO11}) applies at least for the sets  ${\mathbf C}^{(u)}_{k_1,k_2,\cdots, k_u}$  with $k_u=0$. 

  Take next   $k_u=1$ and  a generic codeword $\rho_{\vec{j}^{(\ell)}}$ of  ${\mathbf C}^{(u)}_{k_1,k_2,\cdots, k_{u-1}, 1}$. In this case we have 
 \begin{eqnarray}  \label{IMPOIMPO112pre}
  \ave{\tr{N_{k_1^{(\ell)},k_2^{(\ell)}, \cdots, k_{u-1}^{(\ell)}, 1}^{(u)}\;  \bar{\rho}_{\vec{j}^{(\ell)}}}}_{\cal S}&=& \ave{\tr{ \bar{\rho}_{\vec{j}^{(\ell)}}}}_{\cal S}  -\ave{\tr{P_{k_1^{(\ell)},k_2^{(\ell)}, \cdots, k_{u-1}^{(\ell)}, 0}^{(u)}\;  \bar{\rho}_{\vec{j}^{(\ell)}}}}_{\cal S}
\nonumber \\
&\geq& \ave{\tr{ \bar{\rho}_{\vec{j}^{(\ell)}}}}_{\cal S}  - \sum_{\ell'\neq \ell}    \ave{\tr{P_{\vec{j}^{(\ell')}}\;  \bar{\rho}_{\vec{j}^{(\ell)}}}}_{\cal S}
\end{eqnarray}
where the inequality follows from~(\ref{NEWIMPO1}) plus  adding all the remaining  terms $P_{\vec{j}^{(\ell')}}$ associated with codewords having $\ell'\neq\ell$.
Observe then that from Eq.~(\ref{totaleuno}) we have 
\begin{eqnarray} \label{prima1} 
\ave{\tr{ \bar{\rho}_{\vec{j}^{(\ell)}}}}_{\cal S}  = \sum_{\vec{j}} p_{\vec{j}} \tr{ P {\rho}_{\vec{j}}} = \tr{P \rho^{\otimes n} } \geq 1 -\epsilon_1\;,
\end{eqnarray} 
with $\epsilon_1$ being an exponentially small 	function of $n$. Furthermore  for each term of the sum on the RHS of Eq.~(\ref{IMPOIMPO112}) we have
    \begin{align} \ave{\tr{  P_{\vec{j}^{(\ell')} } \bar{\rho}_{\vec{j}^{(\ell)}} }}_{\cal S}= \sum_{\vec{j}, \vec{j'} } 
       p_{\vec{j}}\;  p_{\vec{j}'} \;{\tr{\bar{\rho}_{\vec{j}} P_{\vec{j}'}}} &= \sum_{\vec{j}'} p_{\vec{j}'}\; \tr{\overline{\rho^{\otimes n}} 
      {P_{\vec{j}'}}}\leq \norm{\overline{\rho^{\otimes n}}}_\infty 
     \sum_{\vec{j}'} p_{\vec{j}'} \; {\tr{P_{\vec{j}'}}} \nonumber \\
      &\leq 
      2^{-n[S(\rho)-\delta]} \; 2^{n\left[\sum_{j}p_{j}S(\rho_j)+\delta\right]}=2^{-n[\chi(\{p_j, \rho_j\})-2\delta]},\label{pdiversoj1}
    \end{align}
    where the second inequality follows from typical subspaces' properties (\ref{totaletre},\ref{parzialedue}) and where     $\chi(\{p_j, \rho_j\})$ is  Holevo information \eqref{holevoinfo} of the source ${\cal E}$. 
  Replacing~(\ref{prima1}) and (\ref{pdiversoj1}) into Eq.~(\ref{IMPOIMPO112pre}) we arrive hence to 
   \begin{eqnarray}  \label{IMPOIMPO112}
  \ave{\tr{N_{k_1^{(\ell)},k_2^{(\ell)}, \cdots, k_{u-1}^{(\ell)}, 1}^{(u)}\;  \bar{\rho}_{\vec{j}^{(\ell)}}}}_{\cal S}
&\geq& 1 -\epsilon_1 - 2^{n R} \; 2^{-n[\chi(\{p_j, \rho_j\})-2\delta]} \;, 
\end{eqnarray}
which shows that as long as  the rate  $R$ respects the  Holevo bound~(\ref{bound}), i.e. 
  \begin{eqnarray} 
     R<\chi(\{p_j, \rho_j\})-2\delta,  \label{IMPO0} 
     \end{eqnarray} 
    for some $\delta >0$,  Eq.~(\ref{IMPOIMPO11}) applies also for  the  sets  ${\mathbf C}^{(u)}_{k_1,k_2,\cdots, k_u}$ with $k_u=1$.  
\end{proof}
The inequalities~(\ref{IMPOIMPO111}) and (\ref{IMPOIMPO112}) prove that under the constraint~(\ref{IMPO0}) the proposed implementation of the bisection decoding scheme asymptotically attains the Holevo bound, yielding
an average error probability which converges to zero in the limit of $n\rightarrow \infty$. 

 \subsubsection{Method 2: via PGM detections} \label{EX00}
   An alternative way to implement the bisection 
    protocol is substituting the sequential set  measurement $N$ with one
    inspired by the Pretty Good Measurement (PGM), first introduced to 
    demonstrate the achievability of the Holevo bound. 
   
   For each set   ${\mathbf C}^{(u)}_{{k_1}, \cdots, k_{u}}$ define the positive operator 
    \begin{equation}
      S^{(u)}_{{k_1},\cdots, k_{u}}=\sum_{\ell'\in{\mathbf C}^{(u)}_{k_1,\cdots, k_{u}}} P_{\vec{j}^{(\ell')}}, 
    \end{equation}
    i.e. the sum of projectors of all the codewords in that set . From the 
    non-orthogonality of projectors and the completeness property \eqref{complete1} 
    it follows 
    \begin{equation}
      S^{(u-1)}_{{k_1},\cdots,k_{u-1}}=S^{(u)}_{{k_1},\cdots,k_{u-1},0}+S^{(u)}_{{k_1},\cdots,k_{u-1},1}\geq\mathbf{1}.
    \end{equation}
    Thus we can build the $u-$th measurement to decide whether the word belongs to $\mathbf{C}^{(u)}_{{k_1},\cdots, k_{u-1}, 0}$ or $\mathbf{C}^{(u)}_{{k_1},\cdots, k_{u-1}, 1}$
    by using the sum operators for these two sets, renormalized 
    by the sum operator for $\mathbf{C}^{(u-1)}_{{k_1},\cdots, k_{u-1}}$, which contains both of them at the previous step:
    \begin{align}\label{PGMdets}
\misub{u}{k_1,\cdots, k_{u}}=\left[{S^{(u-1)}_{k_1,\cdots, k_{u-1}}}\right]^{-1/2} S^{(u)}_{{k_1,\cdots, k_{u}
}}\left[{ 
S^{(u-1)}_{{k_1,\cdots, k_{u-1}}
}}\right]^{-1/2},
       \end{align}
       where the inverse  $\left[{S^{(u-1)}_{k_1,\cdots, k_{u-1}}}\right]^{-1/2}$ is meant to be computed only on the support of ${S^{(u-1)}_{k_1,\cdots, k_{u-1}}}$ (otherwise 
       the  operator is assumed to be null). 
    In this way we obtain a proper set  POVM, since the renormalization allows us to 
    take into account the intersections between typical subspaces of different 
    codewords, i.e. 
    \begin{eqnarray}
0\leq   \misub{u}{k_1,\cdots, k_u} &\leq& \sum_{k\in 0,1}  \misub{u}{k_1,\cdots, k} =\left[{S^{(u-1)}_{k_1,\cdots, k_{u-1}}}\right]^{-1/2} \left[ S^{(u)}_{{k_1,\cdots,0}}
+ S^{(u)}_{{k_1,\cdots,1}}\right] \left[{ 
S^{(u-1)}_{{k_1,\cdots, k_{u-1}}
}}\right]^{-1/2}\nonumber 
\leq {\mathbf 1} \;.
 \label{impo11111} 
  \end{eqnarray}
As in the case discussed previously we can now show that for all $R$ fulfilling the Holevo bound~(\ref{bound}) the operators~(\ref{PGMdets}) 
satisfy the sufficient condition Eq.~(\ref{IMPOIMPO11}) of Theorem~\ref{MainTh}. 

\begin{lemma}
For all rates $R$ satisfying the Holevo bound~(\ref{bound})
the bisection scheme associated with operators $\{\misub{{u}}{k_1, \cdots, k_{{u}}}\}$
defined as in Eq.~(\ref{PGMdets}) fullfils the sufficient condition~\textnormal{(\ref{IMPOIMPO11})}.
%
\end{lemma}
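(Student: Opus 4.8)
The plan is to reduce the statement to the Hayashi--Nagaoka operator inequality and then recycle verbatim the typical-subspace estimates already established for Method~1. Fix a codeword index $\ell$ and a step $u$, and let $k_1^{(\ell)},\cdots,k_u^{(\ell)}$ be the first $u$ bits of the identifier string $\vec{k}^{(\ell)}$, so that $\rho_{\vec{j}^{(\ell)}}$ genuinely belongs to $\mathbf{C}^{(u)}_{k_1^{(\ell)},\cdots,k_u^{(\ell)}}$. I abbreviate the two half-set sum operators as $S:=S^{(u)}_{k_1^{(\ell)},\cdots,k_u^{(\ell)}}$ (the half that contains $\rho_{\vec{j}^{(\ell)}}$) and $T:=S^{(u)}_{k_1^{(\ell)},\cdots,k_{u-1}^{(\ell)},1-k_u^{(\ell)}}$ (the complementary half), so that the splitting relation preceding~\eqref{PGMdets} gives $S+T=S^{(u-1)}_{k_1^{(\ell)},\cdots,k_{u-1}^{(\ell)}}$ and the relevant PGM element reads $\misub{u}{k_1^{(\ell)},\cdots,k_u^{(\ell)}}=(S+T)^{-1/2}S(S+T)^{-1/2}$.

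The crucial ingredient, and the only nonroutine step, is the Hayashi--Nagaoka operator inequality: for any two positive operators $S,T\geq 0$ one has
\begin{equation}
\mathbf{1}-(S+T)^{-1/2}S(S+T)^{-1/2}\leq 2(\mathbf{1}-S)+4T,
\end{equation}
where the inverse is the pseudoinverse on the support of $S+T$ (on its kernel both $S$ and $T$ vanish and the inequality degenerates into $\mathbf{1}\leq 2\,\mathbf{1}$). I would apply this to the pair above, take the trace against the smoothed codeword $\bar{\rho}_{\vec{j}^{(\ell)}}:=P\rho_{\vec{j}^{(\ell)}}P$, and average over the code collection $\mathcal{S}$, obtaining
\begin{equation}
\ave{\tr{\misub{u}{k_1^{(\ell)},\cdots,k_u^{(\ell)}}\,\bar{\rho}_{\vec{j}^{(\ell)}}}}_{\cal S}\geq\ave{\tr{\bar{\rho}_{\vec{j}^{(\ell)}}}}_{\cal S}-2\ave{\tr{(\mathbf{1}-S)\bar{\rho}_{\vec{j}^{(\ell)}}}}_{\cal S}-4\ave{\tr{T\bar{\rho}_{\vec{j}^{(\ell)}}}}_{\cal S}.
\end{equation}

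It then remains to control the three terms, each of which already appeared in Method~1. The first satisfies $\ave{\tr{\bar{\rho}_{\vec{j}^{(\ell)}}}}_{\cal S}\geq 1-\epsilon_1$ by~\eqref{prima1}. For the second, since $S$ is a sum of codeword projectors that includes $P_{\vec{j}^{(\ell)}}$ we have $S\geq P_{\vec{j}^{(\ell)}}$, whence $\tr{(\mathbf{1}-S)\bar{\rho}_{\vec{j}^{(\ell)}}}\leq\tr{\bar{\rho}_{\vec{j}^{(\ell)}}}-\tr{P_{\vec{j}^{(\ell)}}\bar{\rho}_{\vec{j}^{(\ell)}}}$; averaging, bounding $\ave{\tr{\bar{\rho}_{\vec{j}^{(\ell)}}}}_{\cal S}\leq 1$, and invoking the chain~\eqref{IMPOIMPO111} which gives $\ave{\tr{P_{\vec{j}^{(\ell)}}\bar{\rho}_{\vec{j}^{(\ell)}}}}_{\cal S}\geq 1-\epsilon_2-2\sqrt{\epsilon_1}$, yields $\ave{\tr{(\mathbf{1}-S)\bar{\rho}_{\vec{j}^{(\ell)}}}}_{\cal S}\leq\epsilon_2+2\sqrt{\epsilon_1}$. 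For the third, $T=\sum_{\ell'}P_{\vec{j}^{(\ell')}}$ runs over the opposite half-set, consisting of at most $N=2^{nR}$ codewords all with $\ell'\neq\ell$, so estimate~\eqref{pdiversoj1} bounds each averaged summand by $2^{-n[\chi(\{p_j,\rho_j\})-2\delta]}$ and the whole sum by $2^{nR}2^{-n[\chi(\{p_j,\rho_j\})-2\delta]}$.

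Collecting the three bounds gives
\begin{equation}
\ave{\tr{\misub{u}{k_1^{(\ell)},\cdots,k_u^{(\ell)}}\,\bar{\rho}_{\vec{j}^{(\ell)}}}}_{\cal S}\geq 1-\epsilon_1-2\epsilon_2-4\sqrt{\epsilon_1}-4\cdot 2^{-n[\chi(\{p_j,\rho_j\})-R-2\delta]},
\end{equation}
so that, exactly as in Method~1, whenever $R<\chi(\{p_j,\rho_j\})-2\delta$ the right-hand side equals $1-\epsilon(n)$ with $\epsilon(n)$ a positive quantity decaying exponentially, hence faster than $1/n^2$ (recall $\epsilon_1,\epsilon_2=O(e^{-n})$ and property~\eqref{polyn}). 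This is precisely the sufficient condition~\eqref{IMPOIMPO11}, and the claim follows via Theorem~\ref{MainTh}. The single genuine obstacle is therefore the Hayashi--Nagaoka inequality: it is what converts the awkward inverse-square-root structure of the PGM element into a linear combination of $(\mathbf{1}-S)$ and $T$ amenable to the typical-subspace machinery, after which the argument is a direct transcription of the orthogonal-projection computation.
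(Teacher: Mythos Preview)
Your argument has a genuine gap: the Hayashi--Nagaoka inequality does \emph{not} hold for arbitrary positive $S$, it requires $0\leq S\leq\mathbf{1}$. In your application $S=S^{(u)}_{k_1^{(\ell)},\cdots,k_u^{(\ell)}}=\sum_{\ell'}P_{\vec{j}^{(\ell')}}$ is a sum of $N/2^{u}$ generally non-orthogonal projectors, so $\|S\|_\infty$ can be much larger than $1$, and the inequality simply fails. A clean counterexample: take $S=\lambda\mathbf{1}$ with $\lambda>1$ and $T=0$; then the left-hand side is $\mathbf{1}-\mathbf{1}=0$ while the right-hand side is $2(1-\lambda)\mathbf{1}<0$. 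So the operator inequality you invoke as the ``crucial ingredient'' is not available in the form you need.

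The paper avoids this by inserting one extra reduction before calling Hayashi--Nagaoka: since $S^{(u)}_{k_1^{(\ell)},\cdots,k_u^{(\ell)}}\geq P_{\vec{j}^{(\ell)}}$ termwise, one has
\[
\misub{u}{k_1^{(\ell)},\cdots,k_u^{(\ell)}}\;\geq\;\Lambda_\ell:=\bigl(S^{(u-1)}\bigr)^{-1/2}P_{\vec{j}^{(\ell)}}\bigl(S^{(u-1)}\bigr)^{-1/2},
\]
and $\Lambda_\ell$ is exactly of the form $(S'+T')^{-1/2}S'(S'+T')^{-1/2}$ with $S'=P_{\vec{j}^{(\ell)}}\leq\mathbf{1}$ and $T'=\sum_{\ell'\neq\ell}P_{\vec{j}^{(\ell')}}$. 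Now Hayashi--Nagaoka applies legitimately, giving $\mathbf{1}-\Lambda_\ell\leq 2Q_{\vec{j}^{(\ell)}}+4\sum_{\ell'\neq\ell}P_{\vec{j}^{(\ell')}}$, after which your bounds on the three pieces (essentially~\eqref{prima1}, \eqref{qugualej}, \eqref{pdiversoj1}) go through unchanged. In other words, your instinct to reduce to Hayashi--Nagaoka plus the Method~1 estimates is exactly right; you only need to peel $S$ down to a single projector first so that the hypothesis $S\leq\mathbf{1}$ is met.
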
  
\begin{proof}
Observe that 
  \begin{eqnarray}
    \ave{\tr{ {\misub{{u}}{k_1^{(\ell)}, \cdots, k^{(\ell)}_{{u}}}} \; 
    \bar{\rho}_{\vec{j}^{(\ell)}}
    }}_{\cal S} 
      &\geq& \ave{\tr{ \left[{ 
S^{(u-1)}_{{k_1^{(\ell)},\cdots, k_{u-1}^{(\ell)}}
}}\right]^{-1/2}\; P_{\vec{j}^{(\ell)}}  \left[{ 
S^{(u-1)}_{{k_1^{(\ell)},\cdots, k_{u-1}^{(\ell)}}
}}\right]^{-1/2}
    \bar{\rho}_{\vec{j}^{(\ell)}}   }}_{\cal S}\nonumber \\
    &=& \ave{\tr{\Lambda_{\ell} \bar{\rho}_{\vec{j}^{(\ell)}} }}_{\cal S}=\ave{p_{succ}^{(u-1)}(\ell)}_{\cal S},\label{PGMred}
    \end{eqnarray} 
    where the latter is the average success probability of recovering the $\ell$-th codeword 
        from the set  $\mathbf{C}^{(u-1)}_{k_1^{(\ell)},\cdots, k_{u-1}^{(\ell)}}$ 
    while using a PGM strategy and $\Lambda_{\ell}$ is the corresponding POVM 
    element.
    Accordingly  we can bound each of the terms on the RHS of Eq.~(\ref{dopolemma}) by exploiting the efficiency of the PGM protocol. 
       Specifically, we employ the Hayashi-Nagaoka inequality\cite{hayanaga} 
       \begin{equation}
         \mathbf{1}-\Lambda_{\ell}\leq 2 Q_{\vec{j}^{(\ell)}} + 4 \sum_{\ell'\neq\ell} 
         P_{\vec{j}^{(\ell')}}
       \end{equation}
 to write the average success probability as 
 \begin{align}
   \ave{p_{succ}^{(u-1)}(\ell)}_{\cal S} &\geq \ave{\tr{\bar{\rho}_{\vec{j}^{(\ell)}} }}_{\cal S} - 2\ave{\tr{Q_{\vec{j}^{(\ell)}} \bar{\rho}_{\vec{j}^{(\ell)}} } 
   }_{\cal S} - 4\sum_{\ell'\neq\ell}\ave{P_{\vec{j}^{(\ell')}} \bar{\rho}_{\vec{j}^{(\ell)}} }_{\cal 
   S}\\
   &\geq 1- \epsilon_1 - 2 (\epsilon_2+2\sqrt{\epsilon_1}) - 4\cdot 2^{nR}\cdot2^{-n[\chi(\{p_j, 
   \rho_j\})-2\delta]},
 \end{align}
 where the last inequality follows from Eqs.~(\ref{genttotaluno}) and (\ref{pdiversoj1}) 
 and the fact that
 \begin{eqnarray}
 \ave{\tr{\bar{\rho}_{\vec{j}^{(\ell)} } 
      Q_{\vec{j}^{(\ell)} }}}_{\cal S}&=&\ave{\tr{\bar{\rho}_{\vec{j}^{(\ell)}}}}_{\cal S} -\ave{\tr{P_{\vec{j}^{(\ell)} }\bar{\rho}_{\vec{j}^{(\ell)} }}}_{\cal S}\nonumber \\
      &\leq& 1 -\ave{\tr{P_{\vec{j}^{(\ell)} }\bar{\rho}_{\vec{j}^{(\ell)} }}}_{\cal S}  =1-  \sum_{\vec{j}} p_{\vec{j}} \;  {\tr{P_{\vec{j}}\bar{\rho}_{\vec{j}}}}
    \nonumber \\
      &\leq & 1-  \sum_{\vec{j}} p_{\vec{j}} \;\Big(  {\tr{P_{\vec{j} }\rho_{\vec{j}}}} -2 {\trd{\bar{\rho}_{\vec{j}}}{\rho_{\vec{j}}}}\Big) 
      \leq \epsilon_2+ 2\sqrt{\epsilon_1},\label{qugualej}
    \end{eqnarray}
  which is derived as in Eq.\eqref{IMPOIMPO111}.
 Similarly to what we observed in Eq.~(\ref{IMPOIMPO112}) it then follows that if the rate $R$ fulfills the constraint~(\ref{IMPO0}) 
    for some $\delta >0$, then  for  $n$ sufficiently large one has that, for all $u$ and $\ell$, 
    \begin{equation}
      \ave{\tr{ {\misub{{u}}{k_1^{(\ell)}, \cdots, k^{(\ell)}_{{u}}}} \; 
    \bar{\rho}_{\vec{j}^{(\ell)}}}}_{\cal S}  \geq   \ave{\tr{\Lambda_{\ell} \bar{\rho}_{\vec{j}^{(\ell)}} }}_{\cal S}
     \geq  1-\epsilon_3,\label{last}
    \end{equation}
    with $\epsilon_3=O(e^{-n})$ being exponentially small in $n$ and fulfilling the condition~(\ref{polyn}) showing hence that 
     (\ref{IMPOIMPO11}) is satisfied by the selected operators.  
\end{proof}

  \subsubsection{Method 3: via sequential POVM}\label{EX1}
 Another way to    regularize set-projection operators necessary to implement the bisection scheme, is to 
 make use of  the sequential protocol for that 
  set , but without gaining knowledge about the result of this subroutine. Accordingly the regularized set-projection operators will be implemented as a black box, applying the sequential decoding scheme to the set of 
codewords which appear inside that set , taking also into account failure in projecting on the typical subspace of previous codewords, in the code ordering chosen by Bob, see Fig.~\ref{figure2}. The resulting
setting is clearly redundant as the vast majority of information gathered via the sequential decoding is simply neglected in the process.  Also, the same
procedure is iterated every time a  new bit of the bijective encoding has to be acquired, increasing hence the chances of deteriorating   the transmitted codeword.
Still, as we shall see in the following, the scheme is efficient enough to allow for the saturation of the Holevo bound. 
\begin{figure}[th]
	\centering
	\includegraphics[width=.9\textwidth]{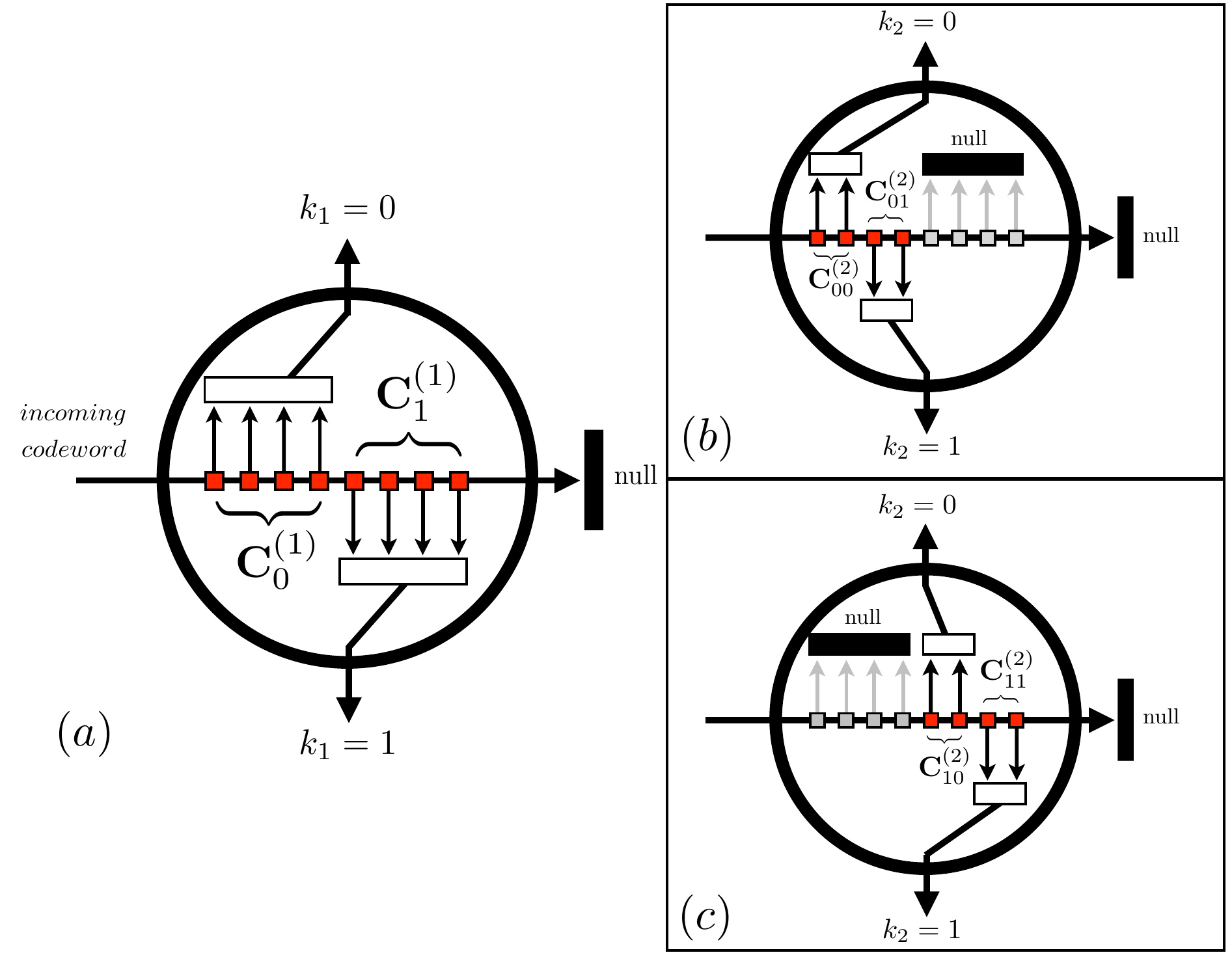}
	\caption{Schematic representation of the set  POVMs ${\cal M}^{(1)}$, ${\cal M}_0^{(2)}$, and ${\cal M}_1^{(2)}$  in terms of the sequential POVM decoding procedure (little square elements) for $N=8$ codewords. Panel (a): implementation of ${\cal M}^{(1)}$. The red color of the square blocks indicates that all the  elements of the sequential decoding POVM are active: their outcomes are used to determine whether the
	incoming codeword belongs to the subset  ${\mathbf  C}_0^{(1)}$ (first four codewords), or to the subset  ${\mathbf C}_1^{(1)}$ (last four codewords) fixing the value of $k_1$. The rectangular elements of the figure indicate that no other information is extracted from the outcomes of the sequential measurement. Panel (b):  implementation of ${\cal M}_0^{(2)}$ which discriminates
	between the subsets ${\mathbf C}_{00}^{(2)}$ and ${\mathbf C}_{01}^{(2)}$.
	This element operates on the state emerging from the port $k_1=0$ of ${\cal M}^{(1)}$, see e.g. Fig.~\ref{figure1}. As indicated by the color, only the  first elements of the sequential POVM are active, while the outputs of the remaining ones are equivalent to the null result. Panel (c): implementation of ${\cal M}_1^{(2)}$ which discriminates among  ${\mathbf C}_{10}^{(2)}$ and ${\mathbf C}_{11}^{(2)}$. }
	\label{figure2}
\end{figure}
  \\

  In order to formalize this construction, for each quantum  codeword $\rho_{\vec{j}^{(\ell)}}\in {\mathbf C}$, we write its corresponding element of the 
  sequential POVM~\cite{seq1,seq2,sen} as
  \begin{eqnarray}
  E_{1}&=&P_{\vec{j}^{(1)}} \;,\nonumber \\
    E_{\ell} &=&Q_{\vec{j}^{(1)}}\ldots Q_{\vec{j}^{(\ell-1)}}P_{\vec{j}^{(\ell)}}Q_{\vec{j}^{(\ell-1)}}\ldots 
    Q_{\vec{j}^{(1)}}, \qquad \ell \geq 2\;,
  \end{eqnarray}
  where $P_{\vec{j}}$ is the projector on the typical subspace of codeword state $\rho_{\vec{j}}$ and $
  Q_{\vec{j}}=\mathbf{1}-P_{\vec{j}}$ its complementary.
  We remind that by construction these operators fulfill the proper normalization condition, 
  \begin{eqnarray} 
 &&0\leq   E_\ell \leq \mathbf{1} \;,\\
 && 0\leq \sum_{\ell = 1}^{N} E_\ell = \mathbf{1} -E_0 \leq \mathbf{1} \;, 
 \end{eqnarray} 
and that, given a  density matrix   $\rho_{\vec{j}}\in \mathbf{C}$,  the probability of recovering the codeword  $\vec{j}^{(\ell)}$ is given  by 
\begin{eqnarray} \label{seqprob} 
P_{seq} (\vec{j}^{(\ell)}| \rho_{\vec{j}}) = \mbox{Tr} [E_\ell\rho_{\vec{j}} ]\;. 
\end{eqnarray}
Using this expression we can hence estimate the probability that $\rho_{\vec{j}^{(\ell)}}$
belongs to the set  ${\mathbf C}^{(u)}_{k_1,k_2,\cdots, k_u}$ by simply summing the above expression over all $\vec{j}^{(\ell)}$  belonging to such set , i.e. 
\begin{eqnarray} 
P(\rho_{\vec{j}^{(\ell)}} \in {\mathbf C}^{(u)}_{k_1,k_2,\cdots, k_u}) = \sum_{\ell\in{\mathbf C}^{(u)}_{k_1,k_2,\cdots, k_u}} P_{seq} (\vec{j}^{(\ell)}| \rho_{\vec{j}})  = \mbox{Tr} [\misub{u}{k_1,k_2,\cdots, k_u}\rho_{\vec{j}} ]\;,
 \end{eqnarray} 
  where the sum is performed over the $\ell$s whose corresponding  vector $\rho_{\vec{j}^{(\ell)}}$ belongs to the set   ${\mathbf C}^{(u)}_{k_1,k_2,\cdots, k_u}$,
and where  \begin{equation}\label{SeqPOVM}
    \misub{u}{k_1,k_2,\cdots, k_u}=\sum_{\ell\in{\mathbf C}^{(u)}_{k_1,k_2,\cdots, k_u}} E_\ell,
  \end{equation}
  is the   set-sequential-measurement
 associated with the set  ${\mathbf C}^{(u)}_{k_1,k_2,\cdots, k_u}$ induced by the sequential decoding POVM.
Since for all $u\in\{ 1,\cdots, {u_F}\}$ and for all $k_1$, $k_2$, $\cdots$, $k_{u-1}$, the sets  ${\mathbf C}^{(u)}_{k_1,k_2,\cdots, k_{u-1},0}$ and ${\mathbf C}^{(u)}_{k_1,k_2,\cdots, k_{u-1},1}$ are not overlapping (see e.g. Eq.~(\ref{inter}))  we have that 
\begin{eqnarray}
0\leq   \misub{u}{k_1,k_2,\cdots, k_u} \leq \sum_{k\in 0,1}  \misub{u}{k_1,k_2,\cdots, k} \leq 
\sum_{\ell=1}^N E_\ell \leq \mathbf{1}\;, \label{impo} 
  \end{eqnarray}
 which  guarantees that the operators 
    $\misub{u}{k_1,k_2,\cdots, k_{u-1},0}$,  $\misub{u}{k_1,k_2,\cdots, k_{u-1},1}$, and  $\misub{u}{k_1,k_2,\cdots, k_{u-1},null} = \mathbf{1}-\sum_{k\in 0,1}  \misub{u}{k_1,k_2,\cdots, k}$ form a properly normalized POVM.


   \begin{lemma}
   For all rates $R$ satisfying the Holevo bound~(\ref{bound})
the bisection scheme associated with operators $\{\misub{{u}}{k_1, \cdots, k_{{u}}}\}$
defined as in Eq.~(\ref{SeqPOVM}) fullfils the sufficient condition~\textnormal{(\ref{IMPOIMPO11})}.
\end{lemma}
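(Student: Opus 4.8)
The plan is to leverage the positivity of the sequential POVM elements $E_\ell$ to reduce the \emph{set}-recovery probability to the recovery probability of a \emph{single} codeword, and then to control the latter with Sen's Lemma (Lemma~\ref{Sen}). Since the transmitted codeword $\rho_{\vec{j}^{(\ell)}}$ belongs to the set $\mathbf{C}^{(u)}_{k_1^{(\ell)},\cdots,k_u^{(\ell)}}$, its index $\ell$ is among those summed in the definition~(\ref{SeqPOVM}) of the set operator. As every $E_{\ell'}\geq 0$, discarding all summands but $E_\ell$ gives the operator inequality
\begin{equation}
\misub{u}{k_1^{(\ell)}, \cdots, k^{(\ell)}_{u}} \;\geq\; E_\ell \;.
\end{equation}
Writing as before $\bar{\rho}_{\vec{j}^{(\ell)}} := P\rho_{\vec{j}^{(\ell)}}P$ for the smoothed codeword, it therefore suffices to lower-bound $\ave{\tr{E_\ell \, \bar{\rho}_{\vec{j}^{(\ell)}}}}_{\cal S}$, i.e.\ the average sequential success probability evaluated on $\bar{\rho}_{\vec{j}^{(\ell)}}$.

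Next I would exploit $P_{\vec{j}^{(\ell)}}^2=P_{\vec{j}^{(\ell)}}$ together with cyclicity of the trace to recast the quantity in the symmetric form controlled by Sen's Lemma, taking the \emph{ordered} list of projectors to be $Q_{\vec{j}^{(1)}},\cdots,Q_{\vec{j}^{(\ell-1)}},P_{\vec{j}^{(\ell)}}$. The delicate point here is the bookkeeping of complements: the complement of each $Q_{\vec{j}^{(i)}}$ is $P_{\vec{j}^{(i)}}$, whereas the complement of the final projector $P_{\vec{j}^{(\ell)}}$ is $Q_{\vec{j}^{(\ell)}}$. Lemma~\ref{Sen} then yields
\begin{equation}
\tr{E_\ell \, \bar{\rho}_{\vec{j}^{(\ell)}}} \;\geq\; \tr{\bar{\rho}_{\vec{j}^{(\ell)}}} - 2\sqrt{\sum_{i=1}^{\ell-1}\tr{\bar{\rho}_{\vec{j}^{(\ell)}} P_{\vec{j}^{(i)}}} + \tr{\bar{\rho}_{\vec{j}^{(\ell)}} Q_{\vec{j}^{(\ell)}}}}\;.
\end{equation}
Averaging over the code collection ${\cal S}$ and using concavity of the square root (Jensen's inequality, $\ave{\sqrt{X}}\leq\sqrt{\ave{X}}$ for the nonnegative random variable under the root) allows me to bring the average inside, giving
\begin{equation}
\ave{\tr{E_\ell \, \bar{\rho}_{\vec{j}^{(\ell)}}}}_{\cal S} \;\geq\; \ave{\tr{\bar{\rho}_{\vec{j}^{(\ell)}}}}_{\cal S} - 2\sqrt{\sum_{i=1}^{\ell-1}\ave{\tr{\bar{\rho}_{\vec{j}^{(\ell)}} P_{\vec{j}^{(i)}}}}_{\cal S} + \ave{\tr{\bar{\rho}_{\vec{j}^{(\ell)}} Q_{\vec{j}^{(\ell)}}}}_{\cal S}}\;.
\end{equation}

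Finally, each of the three families of terms is already estimated elsewhere in the paper. The leading term is $\ave{\tr{\bar{\rho}_{\vec{j}^{(\ell)}}}}_{\cal S}=\tr{P\rho^{\otimes n}}\geq 1-\epsilon_1$ by Eq.~(\ref{prima1}). Every cross term with $i\neq\ell$ factorizes under the average, because distinct codewords are drawn independently from ${\cal E}$, so that $\ave{\tr{\bar{\rho}_{\vec{j}^{(\ell)}} P_{\vec{j}^{(i)}}}}_{\cal S}\leq 2^{-n[\chi(\{p_j,\rho_j\})-2\delta]}$ by Eq.~(\ref{pdiversoj1}); summing the at most $\ell-1<2^{nR}$ such terms gives a contribution $\leq 2^{-n[\chi(\{p_j,\rho_j\})-R-2\delta]}$. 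The on-diagonal term is bounded via Eq.~(\ref{qugualej}) by $\ave{\tr{\bar{\rho}_{\vec{j}^{(\ell)}} Q_{\vec{j}^{(\ell)}}}}_{\cal S}\leq\epsilon_2+2\sqrt{\epsilon_1}$. Collecting these estimates,
\begin{equation}
\ave{\tr{\misub{u}{k_1^{(\ell)}, \cdots, k^{(\ell)}_{u}} \, \bar{\rho}_{\vec{j}^{(\ell)}}}}_{\cal S} \;\geq\; 1 - \epsilon_1 - 2\sqrt{2^{-n[\chi(\{p_j,\rho_j\})-R-2\delta]} + \epsilon_2 + 2\sqrt{\epsilon_1}}\;,
\end{equation}
so that, under the rate constraint $R<\chi(\{p_j,\rho_j\})-2\delta$ of Eq.~(\ref{IMPO0}), every term inside the root is exponentially small in $n$. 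Hence the right-hand side has the form $1-\epsilon(n)$ with $\epsilon(n)=O(e^{-cn})$, which decays faster than $1/n^2$, establishing the sufficient condition~(\ref{IMPOIMPO11}) and thereby~(\ref{dopolemma12323}) through Theorem~\ref{MainTh}.

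I expect the main obstacle to be the correct invocation of Sen's Lemma: the sequential operator $E_\ell$ carries the \emph{complementary} projectors $Q$ for all earlier codewords and a single $P$ for the $\ell$-th one, so the roles of ``projector'' and ``complement'' are swapped relative to the naive reading, and one must track carefully which codeword contributes a $P$-type error term and which contributes the lone $Q$-type term. A secondary care point is the justification of the factorization of the cross terms under the code average, and the verification that summing over $\ell-1\leq N-1$ independent codewords still yields an exponentially small contribution \emph{precisely} because $R$ remains strictly below $\chi(\{p_j,\rho_j\})-2\delta$.
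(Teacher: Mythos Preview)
Your proof is correct and follows essentially the same route as the paper: reduce $\misub{u}{k_1^{(\ell)},\cdots,k_u^{(\ell)}}$ to $E_\ell$ by positivity, rewrite $\tr{E_\ell\bar{\rho}_{\vec{j}^{(\ell)}}}$ in the symmetric form $\tr{P_{\vec{j}^{(\ell)}}Q_{\vec{j}^{(\ell-1)}}\cdots Q_{\vec{j}^{(1)}}\bar{\rho}_{\vec{j}^{(\ell)}}Q_{\vec{j}^{(1)}}\cdots Q_{\vec{j}^{(\ell-1)}}P_{\vec{j}^{(\ell)}}}$, apply Sen's Lemma and concavity of the square root, and bound the three families of terms via Eqs.~(\ref{prima1}), (\ref{pdiversoj1}), (\ref{qugualej}). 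The only cosmetic difference is that the paper pads the sum under the root to all $\ell'\neq\ell$ before estimating, whereas you keep the tighter $i<\ell$ and bound the count by $2^{nR}$; both lead to the same exponential decay under $R<\chi(\{p_j,\rho_j\})-2\delta$.
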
  
   \begin{proof}
   First observe  that each operator ${\misub{{u}}{k_1^{(\ell)}, \cdots, k^{(\ell)}_{{u}}}}$ 
    is the sum of a certain number of sequential POVM elements, 
    always containing the element $E_{\ell}$ corresponding to 
    the right codeword. Since all the operators in the sum are positive we can state 
    \begin{align}
    \ave{\tr{ {\misub{{u}}{k_1^{(\ell)}, \cdots, k^{(\ell)}_{{u}}}} \; 
    \bar{\rho}_{\vec{j}^{(\ell)}}}}_{\cal S} 
      =\sum_{{\ell'}\in{\mathbf C}^{(u)}_{k_1^{(\ell)},\cdots, k_u^{(\ell)}}} 
 \ave{\tr{E_{\ell'}  \; \bar{\rho}_{\vec{j}^{(\ell)}}}}_{\cal S}
      \geq \ave{\tr{E_{\ell} 
      \bar{\rho}_{\vec{j}^{(\ell)}}}}_{\cal S} , \label{IMPO1}
    \end{align}
    where in the last term we recognize the average success probability (\ref{seqprob}) of the sequential protocol
    computed on the  subnormalized version $\bar{\rho}_{\vec{j}^{(\ell)}}$ of the $\ell$-th codeword. 
Accordingly 
       we can bound each of the terms on the RHS of Eq.~(\ref{dopolemma}) by exploiting the efficiency of the sequential protocol.
      Specifically by applying Sen's Lemma \ref{Sen} and using the concavity of the square root function  we can 
    write:
    \begin{align}
    \ave{\tr{E_{\ell} 
      \bar{\rho}_{\vec{j}^{(\ell)}}}}_{\cal S}&=\ave{\tr{P_{\vec{j}^{(\ell)}}Q_{\vec{j}^{(\ell -1)}}\ldots Q_{\vec{j}^{(1)}}\; \bar{\rho}_{\vec{j}^{(\ell)}}\; Q_{\vec{j}^{(1)}}\ldots Q_{\vec{j}^{(\ell-1)}} 
      P_{\vec{j}^{(\ell)}}}}_{\cal S}\\&\geq\ave{\tr{\bar{\rho}_{\vec{j}^{(\ell)} }}}_{\cal S}-2\ave{\sqrt{{\tr{\bar{\rho}_{\vec{j}^{(\ell)}} Q_{\vec{j}^{(\ell)}}}} 
      +\sum_{\ell' \neq \ell }   {\tr{\bar{\rho}_{\vec{j}^{(\ell)}} P_{\vec{j}^{(\ell')}}}} }}_{\cal S} \nonumber \\ 
      &\geq\ave{\tr{\bar{\rho}_{\vec{j}^{(\ell)} }}}_{\cal S}-2\sqrt{\ave{\tr{\bar{\rho}_{\vec{j}^{(\ell)}} Q_{\vec{j}^{(\ell)}}}}_{\cal S} 
      +\sum_{\ell' \neq \ell }   \ave{\tr{\bar{\rho}_{\vec{j}^{(\ell)}} P_{\vec{j}^{(\ell')}}}}_{\cal S} },
    \end{align}
    having added under square root all the terms $P_{\vec{j}^{(\ell')}}$ with $\ell'>\ell$.
    The term outside the square-root can be treated as in \eqref{genttotaluno}.
    For the first term under square-root simply apply Eq. \eqref{qugualej}. 
    For each term of the sum under square-root we can instead use the inequality~(\ref{pdiversoj1}). 
   Therefore we can write 
      \begin{equation}
      \ave{\tr{E_{\ell} 
      \bar{\rho}_{\vec{j}^{(\ell)}}}}_{\cal S} \geq 
      1-\epsilon_1-2\sqrt{\epsilon_2+2\sqrt{\epsilon_1}+2^{nR}\cdot2^{-n[\chi(\{p_j, \rho_j\})-2\delta]}},
    \end{equation}
  which, via Eq.~(\ref{IMPO1}) implies again that for rates $R$ fulfilling Eq.~(\ref{IMPO0}) 
    for some $\delta >0$, then  for  $n$ sufficiently large one has that, for all $u$ and $\ell$, 
    \begin{equation}
      \ave{\tr{ {\misub{{u}}{k_1^{(\ell)}, \cdots, k^{(\ell)}_{{u}}}} \; 
    \bar{\rho}_{\vec{j}^{(\ell)}}}}_{\cal S}  \geq  \ave{\tr{E_{\ell} 
      \bar{\rho}_{\vec{j}^{(\ell)}}}}_{\cal S}  \geq  1-\epsilon_3,\label{last}
    \end{equation}
    with $\epsilon_3=O(e^{-n})$ being exponentially small in $n$ and fulfilling the condition~(\ref{polyn}).
       This proves~(\ref{IMPOIMPO11}) and hence the asymptotic achievability of the Holevo bound with the bisection protocol defined by operators~(\ref{SeqPOVM}). 
    \end{proof}

\section{Conclusions} \label{sec:con}
    In this article we computed an upper bound for the average error probability (over all 
    codewords in a code and over all possible codes) of the bisection decoding 
    scheme. The bound is shown to approach zero exponentially fast  with the 
    codewords' length, for any output ensemble $\mathcal{E}$ whose size is strictly 
    less than $2^{\chi(\mathcal{E})}$. Thus we provided a new proof of the 
    attainability of the Holevo bound for classical communication through a quantum 
    channel for a class of decoding schemes based on the bisection 
    method, whose complexity scales as the logarithm of the codewords' length. 
   An advantage of this protocol is the possibility of gaining a bit of information at each step of the procedure, unlike the 
    sequential decoding, which gives either full or null information about the 
    codeword at each step. This is particularly powerful in the case of failure at a certain step of the protocol, allowing the receiver to 
   at least make use of the previous steps for a partial identification of
    the message. Note also that there is a certain degree of freedom in the implementation of the specific sets' ``yes-no'' 
    measurements, which form a complete POVM at each step, independently of the rest of the 
    protocol, as long as their average error probability approaches zero 
    faster than $n^{-2}$ (e.g. exponentially decaying) as the codewords' length $n$ grows, for all sources respecting the Holevo bound. 
    This fact has been shown by providing three different POVMs which satisfy 
    the bound, employing projectors on typical subspaces and 
    renormalizing for their non-orthogonality. Unfortunately, as in the case of polar coding, this general requirements on the ``yes-no'' set  measurements do not allow one to evaluate their implementation complexity, thus leaving open the problem of determining a structured device for the efficient detection of long codewords.
    Eventually we stress the importance of the Chernoff bound to provide an 
    exponential scaling to the small quantities used in describing the typical 
    subspaces' properties, which in turn allows the convergence of the decoding 
    scheme.

  \appendix

    \section{The law of large numbers via Chernoff bound}\label{append}
    In this appendix we compute an exponential bound for the law of large 
    numbers, which guarantees the convergence of the error probability of our protocol to zero.
   Indeed consider the small quantities $\epsilon_1$, $\epsilon_2$ 
    which appear in Sec.~\ref{sec:tip}. These quantities 
    describe the high probability of finding respectively the average state $\rho^{\otimes n}$ and the codeword states $\rho_{\vec{j}}$
    in their typical subspaces, identified by the projectors $P$ and $P_{\vec{j}}$.
    This is why they are connected, through the classical typical subspaces, to 
    the law of large numbers. \\ 
    Consider for example the average state $\rho$ of the 
    source. We can easily prove that the probability of $n$ copies of the quantum state $\rho$ 
are in its $\delta-$typical subspace, $\tr{P\rho^{\otimes n}}$, is equivalent to the probability of a random 
sample sequence $\vec{x}$ of the corresponding classical source 
being in the classical $\delta-$typical subspace, $Pr(\vec{x}\in T_\delta^n)$:
\begin{align}
    \tr{P\rho^{\otimes n}}&=\tr{\sum_{\vec{x}\in T_\delta^n}\ket{e_{\vec{x}}}\bra{e_{\vec{x}}} \sum_{\vec{x}'}q_{\vec{x}'}\ket{e_{\vec{x}'}}\bra{e_{\vec{x}'} 
    }}\\
    &=\sum_{\vec{x}\in T_\delta^n} 
    \sum_{\vec{x}'}|\braket{e_{\vec{x}}}{e_{\vec{x}'}}|^2q_{\vec{x}'}\\&=\sum_{\vec{x}\in 
    T_\delta^n}q_{\vec{x}}=Pr(\vec{x}\in T_\delta^n).
\end{align}  
A similar result is obtained for each codeword state $\rho_{\vec{j}}$, 
namely
\begin{equation}
  \tr{P_{\vec{j}}\rho_{\vec{j}}}=Pr\left(\vec{y}\in T_\delta^{\vec{j}}\right).
\end{equation}
These probabilities can be bounded from above with the help of the law of large 
numbers. Consider for example the average typical subspace and choose the random variable $Z$, taking values $z=-\log_2{q_{x}}$.
 We also choose the same probability distribution both for $X$ and $Z$, i.e. $q_z\equiv q_x$. Then the law of large numbers states 
that, for any $\delta>0$, the probability that the average of $Z$ over $n$ 
extractions,
\begin{equation}
  \oneover{n}\sum_{i=1}^n z_i=\bar{H}(\vec{x}),
\end{equation} i.e. the sum of $n$ i.i.d. random variables, differs from its
expected value,
\begin{equation}
  \sum_{i=1}^n q_{z_i}z_i=H(X),
\end{equation}
for more than $\delta$ is lower than a small and positive quantity 
$1\gg\epsilon>0$, i.e.
\begin{equation}
  Pr(\vec{x}\in
  T_\delta^n)=Pr\left(|\bar{H}(\vec{x})-H(X)|\geq\delta\right)\leq\epsilon.
\end{equation}
In usual derivations of this result the Chebyshev inequality is exploited, which 
gives a scaling behaviour $\epsilon\sim n^{-1}$. This is not sufficient for 
convergence of the error probability to $0$ for long sequences $n\rightarrow\infty$ 
in \eqref{dopolemma1}. Recalling also that the Chebyshev bound gives a dependence on the variance of the distribution, it is clear that such a scaling is a rough extimate, since the 
law of large numbers is known to be valid also for infinite-variance 
distributions. We therefore use the Chernoff bound to obtain a faster, 
indeed exponential, convergence.\\
Consider first the Markov inequality, valid for any nonnegative random variable 
$t>0$ and $\delta>0$:
\begin{align}
  Pr(t\geq\delta)&=\sum_{t\geq\delta}p_t\leq\sum_{t\geq\delta}p_t\frac{t}{\delta}\\
  &\leq\oneover{\delta}\sum_t t p_t=\frac{\bar{t}}{\delta},
\end{align}
where we used a bar sign to indicate the average over the probability 
distribution of the random variable. The first inequality follows from 
introducing terms which certainly are less than one, given the constraint on the 
sum. The second inequality follows from adding positive terms to the sum, since
the random variable is positive. We now choose $t=e^{sw}$, with $w$ a new random 
variable\footnote{Note that the Chebyshev inequality can be obtained by choosing instead $t=(w-\bar{w})^2$.}
and $\delta=e^{sA}$, without loss of generality.
The Markov inequality then reads
\begin{equation}
  Pr(e^{sw}\geq e^{sA})\leq e^{-sA}g_w(s)
\end{equation}
for any $s,A$, where we called $g_w(s)=\overline{\exp{(sw)}}$ the moment generating function of the random variable 
$w$, i.e.
\begin{equation}
  \overline{w^n}=\frac{d^n g_w(s)}{ds^n}\Big|_{s=0}.
\end{equation}
Now observe that the above inequality between exponentials has two 
different meanings depending on the sign of $s$, implying both
\begin{align}
  &Pr(w\geq A)\leq e^{-sA}g_w(s)\quad s>0\label{codapiu}\\
  &Pr(w\leq A)\leq e^{-sA}g_w(s)\quad s<0.\label{codameno}
\end{align}
These two relations give bounds on the tails of the $w$ probability 
distribution. In order to evaluate how tight such bounds are, we consider the specific case of $w$ being the sum of $n$ i.i.d random variables $x_i$, 
implying for the moment generating function
\begin{align}
  g_w(s)=\overline{\exp\left(s\sum_{i=1}^n 
  x_i\right)}=\prod_{i=1}^n\overline{e^{sx_i}}=\left(g_x(s)\right)^n,
\end{align} 
and take $A=na$, without loss of generality. The previous inequalities become
\begin{align}
  &Pr\left(\oneover{n}\sum_{i=1}^nx_i\geq a\right)\leq \exp\left[-n\left(sa-\ln 
  g_x(s)\right)\right]\quad s>0\label{cp}\\
  &Pr\left(\oneover{n}\sum_{i=1}^nx_i\leq a\right)\leq \exp\left[-n\left(sa-\ln 
  g_x(s)\right)\right]\quad s<0\label{cm}.
\end{align}
We now need to evaluate the behaviour of the coefficient function in the 
exponential: 
\begin{equation}
  h(s)=sa-\ln g_x(s).
\end{equation}
Consider first some properties of $\mu_x(s)=\ln g_x(s)$, following from the nature of the 
moment generating function:
\begin{itemize}
  \item $\mu_x(s=0)=0$, since $g_x(s=0)=1$;
  \item $\mu'_x(s=0)=g'_x(s=0)/g_x(s=0)=\bar{x}$, since $g_x(s=0)=\bar{x}$;
  \item it is convex \begin{align}
  \mu''_x(s)&=\frac{g''_x(s)}{g_x(s)}-\left(\frac{g'_x(s)}{g_x(s)}\right)^2\\
  &=\ave{x^2}_{e}-\ave{x}_{e}^2=\ave{\left(x-\ave{x}_e\right)^2}_e\geq 0
  \quad \forall s,
\end{align}
where we have indicated with \begin{equation}
\ave{f(x)}_e=\frac{\overline{f(x)e^{sx}}}{\overline{e^{sx}}}
\end{equation} 
the probability average with weight $e^{sx}$.
\end{itemize}
 From the previous properties, it follows that the slope of the function, 
starting at $\bar{x}$ at the origin, increases for $s>0$ and decreases for 
$s<0$. Expanding $\mu_x(s)$ for small $s$ at second order, we have for the 
coefficient function
\begin{equation}
  h(s)\simeq (a-\bar{x})s-\frac{s^2}{2}\mu''_x(0).\label{approxcoeff}
\end{equation}
This approximate function (for small $s$) is zero at
\begin{equation}
  s^*=\frac{2(a-\bar{x})}{\mu''_x(0)}.\label{zeropoint}
\end{equation}
Consider now the $s>0$ inequality \eqref{cp}. If $a>\bar{x}$, then the zero $s^*$ is 
positive and inside the range of validity of the inequality. Thus $h(s)>0$ for all $s<s^*$ in the range: the first inequality has a tight bound. Vice versa if $a<\bar{x}$, the zero $s^*$ 
is negative and $h(s)<0$ in the whole range of validity of the first inequality, making it useless.
\\ The situation is reversed when considering the $s<0$ inequality \eqref{cm}. In this case we 
need $s^*<0$, i.e. $a<\bar{x}$, and for any $s>s^*$ in the range the coefficient function will be positive 
again, providing a tight bound for the second inequality.
By calling
\begin{equation}
  h_p=\sup_{s>0}h(s),\quad h_m=\sup_{s<0}h(s),\qquad h_p,h_m>0, 
\end{equation}
the supremum of $h(s)$ in each region, we can thus rewrite the inequalities 
as tight bounds, taking respectively $a=\bar{x}+\delta>\bar{x}$ in the first inequality and $a=\bar{x}-\delta<\bar{x}$, 
in the second one, for any $\delta>0$:
\begin{align}
  &Pr\left(\oneover{n}\sum_{i=1}^nx_i-\bar{x}\geq \delta\right)\leq e^{-nh_p}\\
  &Pr\left(\oneover{n}\sum_{i=1}^nx_i-\bar{x}\leq -\delta\right)\leq e^{-nh_m}.
\end{align}
Eventually we sum the previous inequalities to obtain the law of large numbers 
with exponentially decreasing tails
\begin{equation}
  Pr\left(\left|\oneover{n}\sum_{i=1}^nx_i-\bar{x}\right|\geq \delta\right)\leq 
  e^{-nh_p}+e^{-nh_m}=O(e^{-n})=\epsilon.
\end{equation}
Observe that the small quantity $\epsilon>0$ obtained in this way, exponentially decreasing with increasing $n$, 
also depends on the difference parameter $\delta$ that we 
chose, as of course is to be expected. Indeed this dependence is implicit in the 
definition of $h_p$,$h_m$: by choosing $\delta$, we set different values of $a$ (for both the $s>0$ and $s<0$ cases)
and this in turn varies the point $s^*$ \eqref{zeropoint}, i.e. the range of values of $s$ ($s<s^*$ or $s>s^*$) which can be 
chosen to maximize the coefficient functions. In particular, since the 
expression \eqref{approxcoeff} is a small-s expansion, we do not know what the 
absolute supremum of $h(s)$ is and where it is located\footnote{The function $\mu_x(s)$ is convex, but we do not know its behaviour at large $s$.}. Thus by varying the 
range of $s$ accessible through the tuning of $\delta$, we may happen to exclude 
this and other local supremum points, resulting in (possibly discontinuously) varying values of $h_p$,$h_m$. 
\\
In any case, for our purpose we need only the existence of a range of values 
$s$, both above and below zero and depending on $\delta$, where the coefficient function $h(s)$ is 
positive, and this is guaranteed by the properties of the $\mu_x(s)$ function, 
respectively when $a>\bar{x}$ for positive $s$ and when $a<\bar{x}$ for negative 
$s$.

   \section{Proofs of Lemmas}\label{appendue}
    We give here the proofs of the remaining Lemmas of Section \ref{lemmas}.
    \begin{proof}[Proof of Lemma \ref{trdist}]
      For a hermitian operator we can always write $\omega=A-B$, where $A,B$ are positive matrices with disjoint supports, 
      representing $\omega$ respectively in the positive and negative part of 
      its support. Consider then the operator $\bar{\Lambda}=\Pi_A-\Pi_B$, with $\Pi_A$ 
      and $\Pi_B$ being projectors respectively on the support of $A$ and of $B$. 
      For this operator we can clearly state that 
      $-\mathbf{1}\leq\bar{\Lambda}\leq\mathbf{1}$, i.e.
      for all vectors $\ket{v}$ we have
      \begin{align}
        &\bra{v}(\Lambda-\mathbf{1})\ket{v}\leq0\\
        &\bra{v}(\Lambda+\mathbf{1})\ket{v}\geq 0.
      \end{align}
      By construction we obtain thus an operator which saturates the bound 
      \eqref{trdisteq}:
      \begin{align}
        \tr{\bar{\Lambda}\omega}&=\tr{\left(\Pi_A-\Pi_B\right)A}-\tr{\left(\Pi_A-\Pi_B\right)B}\\
        &=\tr{A}+\tr{B}=Tr|\omega|=\norm{\omega}_1.
      \end{align}
      In order to complete the proof, we need to show that $\bar{\Lambda}$ 
      is the maximizing operator among all possible
      $-\mathbf{1}\leq\Lambda\leq\mathbf{1}$. First observe, by diagonalising $A$ 
      and $B$, that
      \begin{align}
        &\tr{\Lambda 
        A}=\sum_k\alpha_k\bra{a_k}\Lambda\ket{a_k}\leq\sum_k\alpha_k\braket{a_k}{a_k}=\tr{A}\\
        &\tr{\Lambda 
        B}=\sum_k\beta_k\bra{b_k}\Lambda\ket{b_k}\geq\sum_k\beta_k(-\braket{b_k}{b_k})=-\tr{B}.
      \end{align}
      Thus 
      \begin{align}
        \tr{\Lambda \omega}=\tr{\Lambda A}-\tr{\Lambda B}\leq 
        \tr{A}+\tr{B}=Tr|\omega|=\norm{\omega}_1.
      \end{align}
    \end{proof}
    
    \begin{proof}[Proof of Lemma \ref{appclose}]
      Consider that
      \begin{align}
        2D(\rho,\sigma)&=\norm{\rho-\sigma}_1=\max_{-\mathbf{1}\leq\Lambda\leq\mathbf{1}}\tr{\Lambda(\sigma-\rho)}\\
        &\geq\tr{E(\sigma-\rho)},
      \end{align}
      which follows from applying Lemma \ref{trdist} and from the fact that $0\leq E\leq\mathbf{1}$ 
      surely is one of the operators included in the maximization procedure. The 
      result \eqref{appcloseq} is then easily obtained by separating the trace and rearranging terms in the 
      previous inequality.
    \end{proof}
    
    \begin{proof}[Proof of Lemma \ref{gentop}]
      Consider that
      \begin{align}
2D\left(\sqrt{E}\rho\sqrt{E},\rho\right)=\norm{\rho-\sqrt{E}\rho\sqrt{E}}_1&\leq\norm{\rho-\sqrt{E}\rho}_1+
   \norm{\sqrt{E}\rho-\sqrt{E}\rho\sqrt{E}}_1
  \\&=\norm{\left(\mathbf{1}-\sqrt{E}\right)\sqrt{\rho}\cdot\sqrt{\rho}}_1+
  \norm{\sqrt{E}\cdot\rho\left(\mathbf{1}-\sqrt{E}\right)}_1.\label{gentspezzato}
   \end{align}
     thanks to the triangular inequality for the trace distance. Now for the first term 
     write $\sqrt{\rho}$ in diagonal form $\{\sqrt{\lambda_k},\ket{f_k}\}$ and 
     use again the triangular inequality for the trace norm:
     \begin{align}
      \norm{\left(\mathbf{1}-\sqrt{E}\right)\sqrt{\rho}\sum_k\sqrt{\lambda_k}\ket{f_k}\bra{f_k}}_1 
      &\leq\sum_k\sqrt{\lambda_k}\norm{\left(\mathbf{1}-\sqrt{E}\right)\sqrt{\rho}\ket{f_k}\bra{f_k}}_1\\
     &=\sum_k\sqrt{\lambda_k}Tr\sqrt{\ket{f_k}\bra{f_k}\sqrt{\rho}\left(\mathbf{1}-\sqrt{E}\right)^2\sqrt{\rho}\ket{f_k}\bra{f_k}}\\
     &=\sum_k\sqrt{\lambda_k}\sqrt{\bra{f_k}\sqrt{\rho}\left(\mathbf{1}-\sqrt{E}\right)^2\sqrt{\rho}\ket{f_k}}.
     \end{align}
    Apply then the Cauchy-Schwarz inequality
    \begin{equation}
      |\vec{x}\cdot\vec{y}|^2\leq|\vec{x}|^2\cdot|\vec{y}|^2,
    \end{equation}
    with $x_k=\sqrt{\lambda_k}$ and 
    $y_k=\sqrt{\bra{f_k}\sqrt{\rho}\left(\mathbf{1}-\sqrt{E}\right)^2\sqrt{\rho}\ket{f_k}}$,
to obtain 
\begin{align}
  \norm{\left(\mathbf{1}-\sqrt{E}\right)\sqrt{\rho}\sum_k\sqrt{\lambda_k}\ket{f_k}\bra{f_k}}_1 
  &\leq\sqrt{\sum_k\lambda_k\sum_j\bra{f_j}\sqrt{\rho}\left(\mathbf{1}-\sqrt{E}\right)^2\sqrt{\rho}\ket{f_j}}\\
  &\leq\sqrt{\tr{\rho\left(\mathbf{1}-\sqrt{E}\right)^2}},
\end{align}
where we used the fact that $\tr{\rho}=\sum_k\lambda_k\leq1$.
For the second term in \eqref{gentspezzato} write instead $\sqrt{E}$ in its 
diagonal form $\{\sqrt{\nu_k},\ket{e_k}\}$ and proceed in a similar way as before:
\begin{align}
  \norm{\sum_k\sqrt{\nu_k}\ket{e_k}\bra{e_k}\rho\left(\mathbf{1}-\sqrt{E}\right)}_1
&\leq\sum_k\sqrt{\nu_k}\norm{\ket{e_k}\bra{e_k}\rho\left(\mathbf{1}-\sqrt{E}\right)}_1\\
&=\sum_k\sqrt{\nu_k}\norm{\left(\mathbf{1}-\sqrt{E}\right)\rho\ket{e_k}\bra{e_k}}_1\\
&=\sum_k\sqrt{\nu_k}\sqrt{\bra{e_k}\rho\left(\mathbf{1}-\sqrt{E}\right)^2\rho\ket{e_k}}\\
&\leq\sqrt{\sum_k\nu_k\sum_j\bra{e_j}\rho\left(\mathbf{1}-\sqrt{E}\right)^2\rho\ket{e_j}}\\
&\leq\sqrt{\tr{\rho^2\left(\mathbf{1}-\sqrt{E}\right)^2}}\\
&\leq\sqrt{\tr{\rho\left(\mathbf{1}-\sqrt{E}\right)^2}},
\end{align}
where we used the triangular inequality, the invariance of the trace norm under hermitian conjugation, the Cauchy-Schwarz inequality, the 
fact that $\tr{E}=\sum_k\nu_k\leq1$ and the property $\rho^2\leq\rho\leq\mathbf{1}$. 
 The inequality \eqref{gentspezzato} then simply becomes
 \begin{align}
   2D\left(\sqrt{E}\rho\sqrt{E},\rho\right)&\leq2\sqrt{\tr{\rho\left(\mathbf{1}-\sqrt{E}\right)^2}}\\
   &\leq 2\sqrt{\tr{\rho\left(\mathbf{1}-E\right)}},\label{quasifinale}
 \end{align}
 since 
 \begin{align}
   &0\leq E\leq\sqrt{E}\leq\mathbf{1}\\
   &\rightarrow 
   \left(\mathbf{1}-\sqrt{E}\right)^2=\mathbf{1}+E-2\sqrt{E}\leq\mathbf{1}-E.
 \end{align}
 Eventually we take the code average of \eqref{quasifinale} and use the 
 concavity of the square-root function and the hypotesis \eqref{gentopequno} to 
 obtain the thesis \eqref{gentopeqdue}:
 \begin{align}
   \ave{2D\left(\sqrt{E}\rho\sqrt{E},\rho\right)}&\leq2\sqrt{\ave{\tr{\rho\left(\mathbf{1}-E\right)}}}
   \leq2\sqrt{\epsilon}.
 \end{align}
  \end{proof}

\section{Derivation of the bisection POVM}\label{conca}
  Here we provide an explicit derivation of the POVM \eqref{bisegenerica} 
  associated with our bisection protocol. We consider each step to be carried 
  out as a unitary process on an enlarged system, consisting of the state $\ket{\Psi}$ 
  received by Bob (we take it pure for simplicity) and various ancillae, one for 
  each step. The ancillae start in a reference state $\ket{a}$ and will turn into one of three possible states depending on the result of the 
  measurement. In particular at the $u-$th step the ancilla state $\ket{0}$ ($\ket{1}$) 
  corresponds to having found the codeword in set  $\mathbf{C}^{(u)}_{{k_1},\ldots,k_{u-1},0}$ 
  ($\mathbf{C}^{(u)}_{{k_1},\ldots,k_{u-1},1}$), while the state $\ket{null}$ 
  corresponds to failure. \\
  We start by applying the first-step POVM 
  $\mathcal{M}^{(1)}=\{N^{(1)}_{0},N^{(1)}_{1},N^{(1)}_{null}\}$:
  \begin{align}
    U^{(1)}\left(\ket{\Psi}\ket{a}_1\right) &=\sqrt{N^{(1)}_0}\ket{\Psi}\ket{0}_1+\sqrt{N^{(1)}_1}\ket{\Psi}\ket{1}_1
    +\sqrt{N^{(1)}_{null}}\ket{\Psi}\ket{null}_1.
  \end{align}
  After the second step POVM $\mathcal{M}^{(2)}$ we obtain the state
  \begin{align}
    U^{(2)}\left(U^{(1)}\left(\ket{\Psi}\ket{a}_1\right)\ket{a}_2\right) &=
    \sqrt{N^{(2)}_{00}}\sqrt{N^{(1)}_0}\ket{\Psi}\ket{0}_1\ket{0}_2
    +\sqrt{N^{(2)}_{01}}\sqrt{N^{(1)}_0}\ket{\Psi}\ket{0}_1\ket{1}_2
    \nonumber\\&+\sqrt{N^{(2)}_{10}}\sqrt{N^{(1)}_1}\ket{\Psi}\ket{1}_1\ket{0}_2
    +\sqrt{N^{(2)}_{11}}\sqrt{N^{(1)}_1}\ket{\Psi}\ket{1}_1\ket{1}_2
    \nonumber\\&+\sqrt{N^{(2)}_{null}}\sqrt{N^{(1)}_0}\ket{\Psi}\ket{0}_1\ket{null}_2
    +\sqrt{N^{(2)}_{null}}\sqrt{N^{(1)}_1}\ket{\Psi}\ket{1}_1\ket{null}_2
    \nonumber\\&+\sqrt{N^{(1)}_{null}}\ket{\Psi}\ket{null}_1\ket{a}_2.
  \end{align}
  If we stop at this step, the probability of having found $\ket{\Psi}$ in
  a given set , e.g. $\mathbf{C}^{(2)}_{10}$, is
  \begin{align}
    P(10|\rho_{\Psi})&=\bra{\Psi}\sqrt{N^{(1)}_{1}}N^{(2)}_{10}\sqrt{N^{(1)}_{1}}\ket{\Psi}
    =\tr{\left|\sqrt{N^{(2)}_{10}} \sqrt{N^{(1)}_{1}} \right|^2 
    \rho_{\Psi}},
  \end{align}
  which corresponds to equation \eqref{bisegenerica} for $\vec{k}=(1,0)$ and can 
  be easily generalized to an arbitrary number of steps.

 \end{document}